\documentclass[a4paper,USenglish]{lipics-v2021}

\usepackage[showAuthorOrcid]{craigstyle}

\usepackage{hyperref}
\usepackage{babel}
\usepackage{color}
\usepackage{subcaption}
\usepackage{amsmath,bbm}
\usepackage{todonotes}
\usepackage{bussproofs}
\usepackage{wrapfig}

\usepackage{listings} 

\usepackage{xcolor}

\usepackage{tikz}

\usetikzlibrary{calc,fit,positioning}

\allowdisplaybreaks

\newcommand{\defref}[1]{Definition~\ref{#1}}
\newcommand{\lemref}[1]{Lemma~\ref{#1}}
\newcommand{\secref}[1]{Section~\ref{#1}}
\newcommand{\thmref}[1]{Theorem~\ref{#1}}
\newcommand{\figref}[1]{Fig.~\ref{#1}}
\newcommand{\tabref}[1]{Table~\ref{#1}}
\newcommand{\exref}[1]{Example~\ref{#1}}

\presetkeys{todonotes}{disable}{}

\newif\ifoutline
\outlinefalse

\newif\iflong
\longfalse

\newif\ifcolor
\colortrue

\lstdefinestyle{mycode}{
  language=c,
  morekeywords={assert,assume,define,Set,havoc},
  lineskip=.1em,
  numbers=left,
  xleftmargin=2em,
  numberstyle=\scriptsize,
  basicstyle=\small\ttfamily,
  keywordstyle=\color{teal}\ttfamily,
  commentstyle=\color{brown}\ttfamily,
  tabsize=2,
  numbersep=7pt,
  keepspaces=true,
  aboveskip=0pt,
  belowskip=0pt,
  mathescape=true,
  escapeinside={@}{@},
  columns=flexible,
  backgroundcolor=\ifcolor\color{lipicsLightGray}\else\color{white}\fi,
  framerule=\ifcolor 0pt\else 0.4pt\fi
}
\lstdefinestyle{inlinec}{
  style=mycode,
  basicstyle=\ttfamily
}

\def\inl{\lstinline[style=inlinec]}

\newcommand{\contents}[1]{\ifoutline{\footnotesize\color{blue}
    \begin{itemize}
    #1
    \end{itemize}
  \par}\bigskip\fi}

\newcommand{\sig}{\ensuremath{\operatorname{sig}}}

\newcommand{\N}{\mathbbm{N}}
\newcommand{\Q}{\mathbbm{Q}}
\newcommand{\Z}{\mathbbm{Z}}
\newcommand{\R}{\mathbbm{R}}

\newcommand{\naturalSeq}[2]{\ensuremath{#1\;\vdash\;#2}}
\newcommand{\naturalIntSeq}[3]{\ensuremath{#1\;\vdash\;#2\;[\,#3\,]}}
\newcommand{\shortNaturalSeq}[2]{\ensuremath{#1\vdash#2}}
\newcommand{\intSeq}[2]{\ensuremath{#1\; \blacktriangleright\;#2}}
\newcommand{\shortIntSeq}[2]{\ensuremath{#1 \blacktriangleright#2}}
\newcommand{\ruleName}[1]{\textsc{#1}}
\newcommand{\ruleNameInt}[1]{\textsc{#1$^I$}}

\newcommand{\prule}[4]{
  \AxiomC{#1}
  \LeftLabel{#3}
  \ifx#4\empty\else\RightLabel{~(#4)}\fi
  \UnaryInfC{#2}
  \DisplayProof
}

\newcommand{\prulet}[4]{
  \AxiomC{#1}
  \LeftLabel{#3}
  \ifx#4\empty\else\RightLabel{~$\begin{array}{@{}l@{}}
                                   #4
        \end{array}$}\fi
  \UnaryInfC{#2}
  \DisplayProof
}

\newcommand{\prulee}[5]{
  \AxiomC{#1}
  \AxiomC{#2}
  \LeftLabel{#4}
  \ifx#5\empty\else\RightLabel{~(#5)}\fi
  \BinaryInfC{#3}
  \DisplayProof
}

\newcommand{\tr}{\mathit{TR}}

\title{Craig Interpolation in Program Verification}
\author{Philipp R\"ummer}{University of Regensburg, Germany \and
  Uppsala University, Sweden}{philipp.ruemmer@ur.de}{https://orcid.org/0000-0002-2733-7098}{A}
\authorrunning{Philipp R\"ummer}

\begin{document}
\maketitle

\begin{abstract}
  Craig interpolation is used in program verification for automating
  key tasks such as the inference of loop invariants and the
  computation of program abstractions.  This chapter covers some of
  the most important techniques that have been developed in this
  context over the last years, focusing on two aspects: the derivation
  of Craig interpolants modulo the theories and data types used in
  verification and the basic design of verification algorithms
  applying interpolation.

\end{abstract}


\tableofcontents

\section{Introduction}

\contents{
\item What is program verification
\item Challenges in program verification
\item Short history how Craig interpolation found its way to verification
}

In program verification, the question is considered how it can be
shown (effectively or even automatically) whether a given software
program is consistent with its specification. The study of program
verification is as old as computer science and has over the years led
to a plethora of different techniques, studying the question from
various angles and developing approaches that have found widespread
application in both academia and industry.

Craig interpolation found its way to the field of verification when it
was discovered that interpolation can be used to infer
\emph{invariants}~\cite{DBLP:conf/cav/McMillan03}, resulting in a new
paradigm for automating one of the key tasks in verification. The
discovery gave rise to a number of new verification algorithms,
initially focusing on the verification of \emph{hardware circuits,}
for which mainly interpolation in propositional logic was used, and
later extending the methods to the analysis of \emph{software
  programs} and other classes of systems. The newly discovered use of
Craig interpolation initiated the development of more effective
interpolation algorithms and tools for many different logics,
including propositional logic, first-order logic, and the theories
relevant in verification, and was an inspiring factor for the
application of Craig interpolation also in fields other than verification.  Many
automated verification tools today apply Craig interpolation for tasks
such as reasoning about control flow or data, inferring invariants and
other program annotations, or automatically computing program
abstractions.

This chapter focuses on two main aspects related to Craig
interpolation in verification: techniques for computing Craig
interpolants modulo some of the theories used in program verification
and the basic design of verification algorithms applying
interpolation.  Given the huge amount of research in the field, the
chapter does not attempt to provide a complete survey of all results,
but rather introduces some of the main approaches in detail and with
examples and provides a starting point for exploring the world of
interpolation-based verification.

\subsection{Informal Overview: Invariant Inference using Craig Interpolation}
\label{sec:motivating}

\contents{
\item Example of a program with one loop and some assertions in a
  C-like language
\item Craig interpolants correspond to intermediate assertions
\item Craig interpolants are good candidates for inductive loop invariants
\item discuss relationship to Hoare logic?
}

We first illustrate the use of Craig interpolation for inferring loop
invariants using an example. The Fibonacci
sequence~\cite{DBLP:books/aw/GKP1994} is defined by the equations
\begin{equation*}
  F_0 = 0,\quad F_1 = 1,\quad F_{i+2} = F_i + F_{i+1}
\end{equation*}
and is computed by the function shown in \figref{fig:fibonacci}. After
the $i$th iteration of the loop, the program variable~\inl!a!
stores the Fibonacci number~$F_i$ and the variable~\inl!b! the
number~$F_{i+1}$. The function has the precondition~\inl!n >= 0!,
stated as a \inl!requires! clause, and the
postcondition~\inl!\result >= 0! expressing that the result~$F_n$
returned by the function is not negative (``\inl!ensures!'').

The aim of formal verification is to construct a rigorous mathematical
proof that a program is consistent with its specification. In order to
verify the \emph{partial correctness}~\cite{DBLP:books/daglib/0019162}
of \inl!fib!, it has to be proven that the function, whenever it
terminates for a non-negative argument~$n$ with some result~$a$,
ensures the postcondition~$a \geq 0$. Partial correctness proofs are
commonly carried out by an induction over the number of loop
iterations executed by the program. To streamline this reasoning
process, the notion of \emph{inductive loop invariants} (or just
\emph{loop invariants}) has been
introduced~\cite{DBLP:books/daglib/0019162}, which capture properties
that cannot be violated during program execution. More precisely, an
inductive loop invariant for a loop is a formula~$I$ over the program
variables that are in scope in the loop body, satisfying:
\begin{itemize}
\item \emph{Initiation:} whenever the loop is entered, the
  invariant~$I$ holds.
\item \emph{Consecution:} whenever the loop body is executed starting
  in a state in which the invariant~$I$ and the loop condition
  hold, the invariant~$I$ also holds after executing the loop body.
\item \emph{Sufficiency:} whenever the loop terminates in a state
  satisfying the loop invariant~$I$, the rest of the program executes
  without errors and the postcondition is established.
\end{itemize}

\begin{figure}[tb]
  \raisebox{1.5ex}{}

  \begin{minipage}{0.48\linewidth}
  \begin{subfigure}{\linewidth}
\begin{lstlisting}
/**
 * requires n >= 0;
 * ensures  \result >= 0;
 */
int fib(int n) {
  int a = 0;
  int b = 1;
  int t;

  for (int i = 0; i < n; ++i) {
    t = b;
    b = a + b;
    a = t;
  }

  return a;
}
\end{lstlisting}
  
  \caption{Program computing the $n$th element of the
    Fibonacci sequence.}
  \label{fig:fibonacci}
  \end{subfigure}
  \end{minipage}
  \hfill
  \begin{minipage}{0.48\linewidth}
  \begin{subfigure}{\linewidth}
\begin{lstlisting}
assume(n >= 0); // requires
int a = 0; int b = 1; int t;
int i = 0;
assert($I$);
\end{lstlisting}
\caption{Initiation}
\label{fig:fibInit}
  \end{subfigure}

  \bigskip
  \begin{subfigure}{\linewidth}
\begin{lstlisting}
assume($I$);
assume(i < n);  
t = b;
b = a + b;
a = t;
++i;
assert($I$);
\end{lstlisting}
\caption{Consecution}
\label{fig:fibConsec}
  \end{subfigure}

  \bigskip
  \begin{subfigure}{\linewidth}
\begin{lstlisting}
assume($I$);
assume(!(i < n));
assert(a >= 0); // ensures
\end{lstlisting}
\caption{Sufficiency}
\label{fig:fibSuff}
  \end{subfigure}
  \end{minipage}
  
  \caption{Fibonacci program and the conditions for inductive loop
    invariants.}
  \label{fig:motivatingExample}
\end{figure}

For our example program, the three conditions are captured formally on
the right-hand side of \figref{fig:motivatingExample} using the
program instructions~\inl!assert!  and
\inl!assume!~\cite{fla-sax-01-popl}. The semantics of
\inl!assert($\phi$)! is to evaluate a condition~$\phi$ in the current
program state and raise an error if $\phi$ is violated; if $\phi$
holds, \inl!assert($\phi$)! has no effect and program execution
continues. The semantics of \inl!assume($\phi$)! is similar, but
instead of raising an error in case $\phi$ is violated, the statement
will instead suspend the program execution, preventing the rest of the
program from being executed. A statement \inl!assume(false)!, in
particular, stops further program execution and has similar effect as a
non-terminating loop. To deal
with \inl!assert!  and \inl!assume!, we generalize our notion of
partial correctness and say that a program is \emph{partially
  correct} if no \inl!assert!  statement in the program can fail,
regardless of the program inputs and the initial state. In the
example, the sufficient inductive invariants are exactly the
formulas~$I$ for which the three programs in Figs.~\ref{fig:fibInit}
to \ref{fig:fibSuff} are partially correct.

We can observe that all three conditions are satisfied by the formula
$ I = \mathtt{a} \geq 0 \wedge \mathtt{b} \geq 0 $.  This formula~$I$
is therefore a loop invariant and represents a witness for the partial
correctness of \inl!fib!. The simpler formula~$\mathtt{a} \geq 0$,
although it is a property that holds throughout any execution of the
program as well, does not satisfy the \emph{consecution} condition and
is therefore not a correct inductive invariant. More precisely, when
substituting $\mathtt{a} \geq 0$ for $I$ in \figref{fig:fibConsec}, we
obtain a program that is not partially correct, an assertion failure
occurs when the program is run starting in the state~$\mathtt{a} = 0,
\mathtt{b} = -1, \mathtt{i} = 0, \mathtt{n} = 1$.

\begin{figure}[tb]
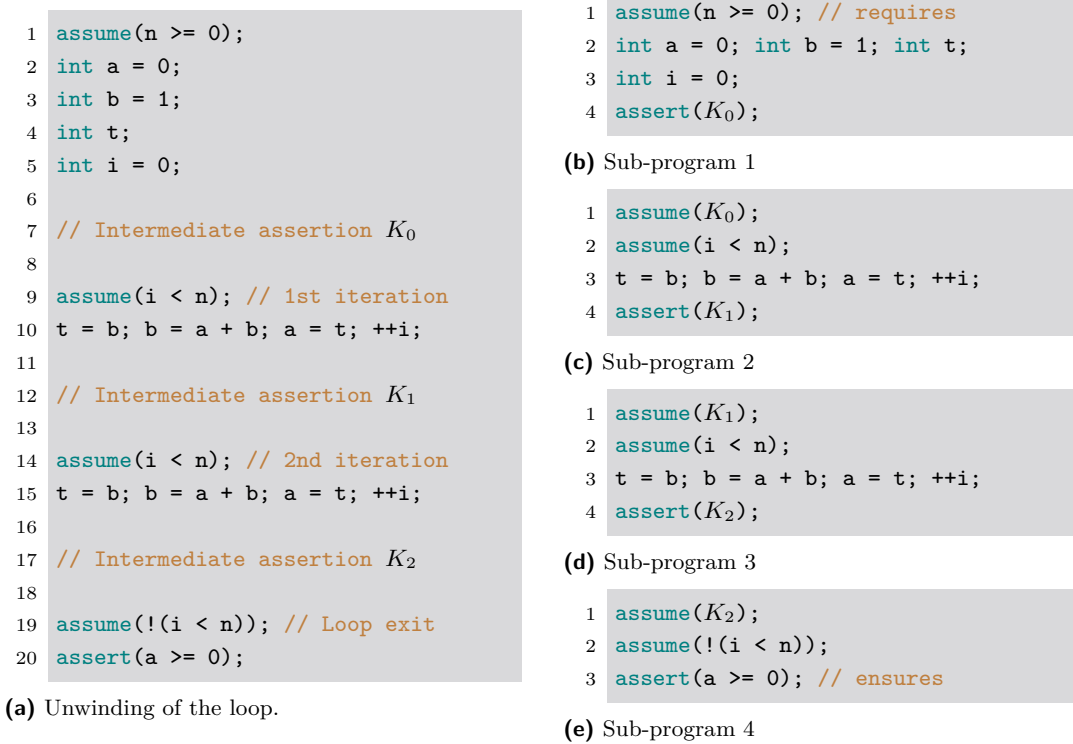

  \raisebox{1.5ex}{}
  
  \begin{minipage}{0.48\linewidth}
  \begin{subfigure}{\linewidth}
\begin{lstlisting}
assume(n >= 0);
int a = 0;
int b = 1;
int t;
int i = 0;

// Intermediate assertion $K_0$

assume(i < n); // 1st iteration
t = b; b = a + b; a = t; ++i;

// Intermediate assertion $K_1$

assume(i < n); // 2nd iteration
t = b; b = a + b; a = t; ++i;

// Intermediate assertion $K_2$

assume(!(i < n)); // Loop exit
assert(a >= 0);
\end{lstlisting}
\caption{Unwinding of the loop.}
  \label{fig:fibonacciUnwinding}
\end{subfigure}
\end{minipage}
  \hfill
  \begin{minipage}{0.48\linewidth}
  \begin{subfigure}{\linewidth}
\begin{lstlisting}
assume(n >= 0); // requires
int a = 0; int b = 1; int t;
int i = 0;
assert($K_0$);
\end{lstlisting}
\caption{Sub-program 1}
\label{fig:fibUnwind1}
  \end{subfigure}

  \bigskip
  \begin{subfigure}{\linewidth}
\begin{lstlisting}
assume($K_0$);
assume(i < n);  
t = b; b = a + b; a = t; ++i;
assert($K_1$);
\end{lstlisting}
\caption{Sub-program 2}
\label{fig:fibUnwind2}
  \end{subfigure}

  \bigskip
  \begin{subfigure}{\linewidth}
\begin{lstlisting}
assume($K_1$);
assume(i < n);  
t = b; b = a + b; a = t; ++i;
assert($K_2$);
\end{lstlisting}
\caption{Sub-program 3}
\label{fig:fibUnwind3}
  \end{subfigure}

  \bigskip
  \begin{subfigure}{\linewidth}
\begin{lstlisting}
assume($K_2$);
assume(!(i < n));
assert(a >= 0); // ensures
\end{lstlisting}
\caption{Sub-program 4}
\label{fig:fibUnwind4}
  \end{subfigure}
  \end{minipage}

  \caption{Unwinding two iterations of the loop in the Fibonacci program.}
  \label{fig:motivatingExample2}
 \end{figure}

\bigskip

Coming up with inductive loop invariants, like the formula
$\mathtt{a} \geq 0 \wedge \mathtt{b} \geq 0$, is a known challenge in
verification, for the same reasons as it is not easy to find induction
formulas in proofs. Craig interpolation is a useful and
widely applied technique for finding invariant candidates. To identify
the connection between Craig interpolation and loop invariants,
consider the unwinding of the Fibonacci program in
\figref{fig:fibonacciUnwinding}, which is obtained by listing the
statements on the execution path that executes the loop body
twice. The assumptions added to the program capture the status of the
loop condition, which is $\mathit{true}$ for two loop iterations and
$\mathit{false}$ after that, and ensure that the executions of the
program in \figref{fig:fibonacciUnwinding} correspond exactly to the
executions of the original program in \figref{fig:fibonacci} with two
loop iterations. Unwinding loops to obtain loop-free programs is a
common technique in verification.

Consider then possible \emph{intermediate assertions}~$K_0, K_1, K_2$
that enable us to decompose the program in
\figref{fig:fibonacciUnwinding} into multiple parts at the marked
locations. We say that a formula~$A$ is an intermediate assertion for
a partially correct program~\inl!$P$;$Q$! if the programs
\inl!$P$;assert($A$)! and
\inl!assume($A$);$Q$! are both partially correct as well; in
other words, if $A$ is a postcondition that is ensured by the first
sub-program~$P$, but that is also precondition ensuring error-free
execution of the second sub-program~$Q$. The three intermediate
assertions in \figref{fig:fibonacciUnwinding} similarly split the
program into four sub-programs, shown on the right-hand side of
\figref{fig:motivatingExample2}, each of which needs to be partially
correct for the intermediate assertions to be suitable.

\bigskip

The use of Craig interpolants for finding inductive loop invariants is
based on two key observations:
\begin{enumerate}[(i)]
\item intermediate assertions are a relaxation of inductive loop
  invariants, and
\item intermediate assertions are a case of Craig interpolants.
\end{enumerate}
To see point~(i), we can compare the conditions in
\figref{fig:motivatingExample} with the conditions given in
\figref{fig:motivatingExample2}. Conditions~\ref{fig:fibInit} and
\ref{fig:fibUnwind1} correspond to each other, as do
conditions~\ref{fig:fibSuff} and \ref{fig:fibUnwind4}, whereas the
consecution condition~\ref{fig:fibConsec} is only approximated by
conditions~\ref{fig:fibUnwind2} and \ref{fig:fibUnwind3}. Therefore,
any formula~$I$ satisfying the conditions in
\figref{fig:motivatingExample}, when substituted for $K_0, K_1, K_2$,
will also satisfy the conditions in \figref{fig:motivatingExample2} and
give rise to a correct decomposition of the program. Vice versa, not
all possible intermediate assertions~$K_0, K_1, K_2$ are correct loop
invariants, since intermediate assertions will in general not satisfy
consecution~\ref{fig:fibConsec}. Given a method to explore possible
intermediate assertions~$K_0, K_1, K_2$, however, we can generate
\emph{candidates} for inductive loop invariants. We survey some of the
algorithms that have been proposed for generating and refining
invariant candidates in this way in \secref{sec:verification}.

\begin{figure}[tb]
  \begin{minipage}{0.48\linewidth}
    \bigskip
  \begin{subfigure}{\linewidth}
\begin{lstlisting}
assume(n >= 0);
int a = 0, b = 1;
int t, i = 0;
assume(i < n);
t = b; b = a + b; a = t; ++i;

// Intermediate assertion $K$

assume(i < n);
t = b; b = a + b; a = t; ++i;
assume(!(i < n));
assert(a >= 0);
\end{lstlisting}
\caption{Computation of a single intermediate assertion.}
  \label{fig:oneIntermediateAssertion}
\end{subfigure}
\end{minipage}
  \hfill
  \begin{minipage}{0.48\linewidth}
  \begin{subfigure}{\linewidth}
    \begin{equation*}
        \begin{array}{@{}ll@{}}
          n_0 \geq 0 \wedge\mbox{}\\
          a_0 = 0 \wedge b_0 = 1 \wedge i_0 = 0 \wedge\mbox{}\\
          i_0 < n_0 \wedge\mbox{} & \smash{\left.\rule{0mm}{8ex}\right\}} A\\
          t_1 = b_0 \wedge b_1 = a_0 + b_0 \wedge a_1 = t_1 \wedge\mbox{}\\
          i_1 = i_0 + 1\\[0.3ex]
           \wedge\mbox{}\\[0.3ex]
          i_1 < n_0  \wedge\mbox{}\\
          t_2 = b_1 \wedge b_2 = a_1 + b_1 \wedge a_2 = t_2 \wedge\mbox{}\\
           i_2 = i_1 + 1 \wedge\mbox{}  & \smash{\left.\rule{0mm}{8ex}\right\}} C\\
           i_2 \not< n_0\\[0.5ex]
           \to a_2 \geq 0
        \end{array}
    \end{equation*}
\caption{Verification condition for partial correctness.}
  \label{fig:fibVC}
\end{subfigure}
  \end{minipage}

  \caption{Intermediate assertions.}
  \label{fig:motivatingExample3}
 \end{figure}

To see point~(ii), we focus on the case of just a single intermediate
assertion to be computed for the unwound Fibonacci program, shown in
\figref{fig:oneIntermediateAssertion}. The partial correctness of the
program can be checked by translating the program, including the
\inl!assert! and \inl!assume! statements, to a
\emph{verification condition}~\cite{DBLP:books/daglib/0019162}, which
is in this case a formula over linear integer arithmetic. The
verification condition is shown in \figref{fig:fibVC} and is constructed
in such a way that the condition is \emph{valid} if and only if the
program is \emph{partially correct.}  For this, the condition lists
all assumptions made in the program, includes the assignments to
variables in the form of equations, and states the assertion as the
last inequality~$a_2 \geq 0$. Since the program variables can change
their values during program execution, a program variable can
correspond to multiple symbols in the verification condition: for
instance, program variable~\inl!a! turns into the logical
constants~$a_0, a_1, a_2$.

By comparing the verification condition in \figref{fig:fibVC} with our
definition of intermediate assertions, we can see that the correct
intermediate assertions~$K$ correspond to the formulas~$\phi$ for
which the implications~$A \to \phi$ and $\phi \to C$ are valid, with
$A, C$ as marked in \figref{fig:fibVC}. Indeed, $K$ has to be a
formula that is entailed by the program statements prior to the
respective point in the program, which is expressed by the implication
$A \to \phi$, and it has to be strong enough to ensure that the rest
of the program executes correctly, hence $\phi \to C$. This already
indicates that the intermediate assertions correspond to the
\emph{Craig interpolants} for the implication~$A \to C$, as
interpolants have to satisfy the same implication conditions.

To bridge the gap between intermediate assertions and Craig
interpolants further, we have to consider which symbols are allowed to
occur in the intermediate assertion~$K$. The formula~$K$ is
supposed to be an expression over the program variables
\inl!n, a, b, t, i!.
In the verification condition, those variables have been replaced with
different indexed versions; at the point where $K$ is supposed to be
evaluated (line~7 of \figref{fig:oneIntermediateAssertion}), the most
recent versions are named $n_0, a_1, b_1, t_1, i_1$. Naturally, those
are exactly the symbols that occur in both $A$ and $C$, as
all the older previous versions of the variables will be present only
in $A$, while future versions can only occur in $C$.  Craig
interpolants for the implication~$A \to C$ therefore range over
exactly the right set~$n_0, a_1, b_1, t_1, i_1$ of symbols. To derive
intermediate assertions for our program, we can first compute a Craig
interpolant~$\phi$ for $A \to C$, and then rename the 
symbols to obtain the intermediate
assertion~$K = \phi[n_0 \mapsto \mathtt{n}, a_1 \mapsto \mathtt{a}, b_1 \mapsto \mathtt{b},
t_1 \mapsto \mathtt{t}, i_1 \mapsto \mathtt{i}]$.


We will return to the Fibonacci example later in this chapter, when it
is discussed how we can compute Craig interpolants modulo the theories
relevant for the program. One possible interpolant we will then derive
is the formula~$\phi = (b_1 \geq 1)$, which can be translated to the
intermediate assertion~$K = (\mathtt{b} \geq 1)$.

\subsection{Organization of the Chapter}

The rest of the chapter consists of two main parts. In
\secref{sec:satIntMT}, we introduce the required background in
logic and formally define several notions of Craig interpolation. We
then introduce some of the techniques that have been developed in
the verification field for computing Craig interpolants, focusing on
interpolants for arithmetic formulas that are essential for
verification. In \secref{sec:verification}, the second part, we 
put the interpolation procedures to use by exploring several
interpolation-based verification algorithms.


\section{Satisfiability and Interpolation Modulo Theories}
\label{sec:satIntMT}

\subsection{Basic Definitions}

\contents{
\item First-order logic
\item Theories
\item Interpolation in theories, quantifier-free interpolation
\item Reverse interpolants (connect to foundation chapter, ``separators'')
\item Sequence interpolants
\item Tree interpolants (extension of ``parallel interpolants'')
\item Combination of propositional and theory interpolation
\item Support for interpolation in SMT solvers
}


Automatic approaches in
verification often work in a setting of \emph{quantifier-free
  theories} in \emph{first-order logic,} where the theories are chosen
to reflect the data types found in programming languages, while the
restriction to quantifier-free formulas is motivated by the desire to
stay in a decidable language. For reasoning about quantifier-free
formulas modulo theories, the \emph{Satisfiability Modulo Theories}
(SMT) paradigm~\cite{BSST21} has been developed, in which SAT solvers
are combined with theory solvers to obtain efficient decision
procedures.  Similarly, much of the research on Craig interpolation in
verification has focused on methods to compute quantifier-free Craig
interpolants modulo theories.

\subsubsection{First-Order Logic and Theories}

We assume a setting of multi-sorted (classical) first-order logic, see
for instance \cite{DBLP:books/daglib/0022394}. Slightly extending the
notation introduced in \refchapter{chapter:predicate}, a
signature~$\Sigma$ is now a collection of relation symbols, constant
and function symbols, as well as sorts, where each predicate~$R$ is
endowed with a uniquely defined
rank~$\alpha(R) = (\sigma_1, \ldots, \sigma_n)$ of argument sorts, and
each constant and function symbol~$f$ with a
rank~$\alpha(f) = (\sigma_1, \ldots, \sigma_n, \sigma_0)$ consisting
of the argument sorts~$\sigma_1, \ldots, \sigma_n$ and the result
sort~$\sigma_0$.  Formulas~$\phi$ and terms~$t$ are defined by the
following grammar:
\begin{equation*}
  \phi ~::=~ R(t, \ldots, t) \mid t = t \mid \top \mid \phi \wedge \phi
  \mid \neg \phi \mid \exists x : \sigma.\, \phi~,
  \qquad\qquad
  t ~::=~ x \mid  f(t, \ldots, t)~.
\end{equation*}
In the grammar, $R$ ranges over relation symbols, $f$ over function
symbols, $x$ over variables, and $\sigma$ over sorts. Constants are
identified with nullary functions. We generally assume that terms and
formulas are well-typed, which means that relation and function
symbols are only applied to terms of the right sort, and that $=$ is
only used to compare terms of the same sort.
We also use the short-hand notations~$\bot$, $\phi \vee \phi$,
$\phi \to \phi$, and $\forall x : \sigma.\, \phi$, all with the usual
definition. Given a formula~$\phi$ or term~$t$, we denote the sets of
occurring relation and function symbols by
$\sig(\phi) \subseteq \Sigma$ and $\sig(t) \subseteq \Sigma$,
respectively.

A \emph{closed formula} or \emph{sentence} is a formula in which all
variables are bound. A \emph{ground formula} is a formula that does
not contain any variables (and, therefore, no quantifiers). Note that
ground formulas can still contain constant symbols. A
\emph{quantifier-free} formula is a formula that does not contain
quantifiers, but that might contain variables.

The semantics of formulas is defined, as is usual, in terms of
\emph{$\Sigma$-structures}~$S$ that map each sort~$\sigma$ in $\Sigma$
to some non-empty domain~$D_\sigma$, each relation symbol in $\Sigma$
to a set-theoretical relation over the argument domains, and each
function symbol in $\Sigma$ to a set-theoretical function from the
argument domains to the result domain. To define the satisfiability of
a formula modulo theories, we need the notion of
\emph{extensions} of signatures and structures. We say that a
signature~$\Omega$ is an \emph{extension} of $\Sigma$, denoted
$\Sigma \subseteq \Omega$, if $\Omega$ contains more sorts, relation
symbols, or function functions than $\Sigma$, without changing the
rank~$\alpha(R)$ or $\alpha(f)$ of symbols that already exist in
$\Sigma$. If $\Sigma \subseteq \Omega$, then a
$\Sigma$-structure~$S_\Sigma$ is \emph{extended} by an
$\Omega$-structure~$S_\Omega$ if $S_\Omega$ interprets the sorts and
symbols already present in $\Sigma$ in the same way as $S_\Sigma$,
while adding interpretations for the new sorts and symbols in
$\Omega$.

\begin{definition}[Theory]
  \label{def:theory}
  A \emph{theory}~$T$ is defined by a signature~$\Sigma_T$ and a
  set~$\mathcal{S}_T$ of structures over $\Sigma_T$. A sentence~$\phi$
  over the signature~$\Sigma_T$ is called \emph{$T$-satisfiable (or
    $T$-sat)} if it evaluates to $\mathit{true}$ for some structure in
  $\mathcal{S}_T$. A sentence~$\phi$ over an
  extension~$\Omega \supseteq \Sigma_T$ is called
  \emph{$T$-satisfiable} if it evaluates to $\mathit{true}$ in some
  extension of a structure in $\mathcal{S}_T$ to $\Omega$.
\end{definition}

For sentences in some signature~$\Omega$ that is larger than
$\Sigma_T$, this means that additional sorts and relation or function
symbols can be interpreted arbitrarily.
As a special case, we say that a sentence~$\phi$ is simply
\emph{satisfiable} if it is $T_\emptyset$-satisfiable in the empty
theory~$T_\emptyset$.
Sentences that are not $T$-satisfiable are called
\emph{$T$-unsatisfiable (or $T$-unsat),} and sentences whose negation
is $T$-unsatisfiable are called \emph{$T$-valid}. We also speak of a
set~$M$ of sentences being $T$-satisfiable if there is a common
structure satisfying all sentences in $M$.

We write $\Gamma \models_T \psi$ if the set~$\Gamma$ of sentences
entails the sentence~$\psi$ modulo $T$: whenever an extension of a
structure in $\mathcal{S}_T$ satisfies all sentences in $\Gamma$, it
also satisfies $\psi$. As a special case, we write
$\phi \models_T \psi$ if sentence~$\phi$ entails sentence~$\psi$
modulo $T$, which is equivalent to $\{\phi\} \models_T \psi$ and to
the conjunction~$\phi \wedge \neg \psi$ being $T$-unsatisfiable. We
write~$\phi \equiv_T \psi$ for two sentences being $T$-equivalent,
defined as $\phi \models_T \psi$ and $\psi \models_T \phi$, and leave
out the $T$ when it is clear from the context.

\subsubsection{Craig Interpolation Modulo Theories}

Since algorithms in verification (just like in automated reasoning)
are usually formulated in terms of \emph{satisfiable} or
\emph{unsatisfiable} formulas, instead of \emph{valid} or
\emph{counter-satisfiable} formulas, it is common in the area to also
define Craig interpolation in the ``reversed'' way. The
relationship between standard Craig interpolants and \emph{separators}
has already been discussed in
\refchapter{chapter:propositional}, clarifying that the
difference between the two notions is mostly superficial, at least in
classical logic. In order to avoid confusion with the definition in the
literature, we mostly stick to this reversed version of interpolation
throughout the chapter, but also introduce the standard version for
completeness reasons.

\begin{definition}[Craig interpolant modulo theory]
  Suppose that $T$ is a theory and $A \to C$ a sentence. A
  sentence~$I$ is called a \emph{Craig interpolant modulo $T$} for
  $A \to C$ if the following conditions hold:
  \begin{itemize}
  \item $A \models_T I$ and $I \models_T C$, and
  \item $\sig(I) \subseteq (\sig(A) \cap \sig(C)) \cup \Sigma_T$.
  \end{itemize}
\end{definition}

Compared to the definition of Craig interpolants in first-order logic
from \refchapter{chapter:predicate}, we now consider
entailment~$\models_T$ modulo the theory~$T$, and we allow the
interpolant~$I$ to include arbitrary \emph{interpreted} symbols from
the theory signature~$\Sigma_T$. As a further, more technical
difference, we only consider closed formulas (sentences) in our
definition; this restriction could be lifted easily but is usually
unproblematic, since free variables can be replaced with constant
symbols. The definition in \refchapter{chapter:predicate}
otherwise corresponds to our definition for the special case of the
empty theory~$T_\emptyset$.

\begin{example}
  Consider the implication~$A \to C$, with $A, C$ defined as in
  \figref{fig:fibVC} over the theory of linear integer arithmetic
  (LIA). Correct Craig interpolants modulo LIA include
  $I_1 = b_1 \geq 1$ and $I_2 = a_1 \geq 0 \wedge b_1 \geq 0$.
  \lipicsEnd
\end{example}

As discussed in \refchapter{chapter:propositional}, an
interpolation problem $A \to C$ can equivalently be expressed as a
conjunction, leading to the alternative reverse notion of
interpolation:

\begin{definition}[Separator, reverse Craig interpolant]
  \label{def:separator}
  Suppose that $T$ is a theory and $A \wedge B$ a sentence. A
  sentence~$I$ is called a \emph{$T$-separator} or \emph{reverse Craig
    interpolant modulo $T$} for $A \wedge B$ if the following
  conditions hold:
  \begin{itemize}
  \item $A \models_T I$ and $I \wedge B \models_T \bot$, and
  \item $\sig(I) \subseteq (\sig(A) \cap \sig(B)) \cup \Sigma_T$.
  \end{itemize}
\end{definition}

\begin{theorem}[Existence of $T$-separators~\cite{DBLP:conf/cade/KovacsV09}]
  \label{thm:separators}
  Suppose that $T$ is a theory and $A \wedge B$ a sentence that is
  $T$-unsatisfiable. Then there is a $T$-separator for $A \wedge B$;
  more precisely, there are $T$-separators~$I_1, I_2$ such that:
  \begin{itemize}
  \item $A \models_T I_1$ and $I_1 \wedge B \models \bot$,
  \item $A \models I_2$ and $I_2 \wedge B \models_T \bot$,
  \item $\sig(I_1) \subseteq (\sig(A) \cup \Sigma_T) \cap \sig(B)$, and
  \item $\sig(I_2) \subseteq \sig(A) \cap (\sig(B) \cup \Sigma_T)$.
  \end{itemize}
\end{theorem}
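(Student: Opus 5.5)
The plan is to derive both separators from classical Craig interpolation (equivalently, Robinson's joint consistency theorem) together with compactness. Both statements are one-sided: $I_1$ may mention theory symbols that do not occur in $B$ only in the sense that it must lie in $\sig(B)$, and it must refute $B$ without help from $T$. I will prove the claim for $I_1$ in detail; the claim for $I_2$ follows by the symmetric argument with the roles of $A$ and $B$ exchanged (and negation applied at the end).

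Let $L_1 = (\sig(A)\cup\Sigma_T)\cap\sig(B)$. Define the set of $T$-consequences of $A$ in the shared language,
\begin{equation*}
\Gamma = \{\psi \text{ an } L_1\text{-sentence} : A \models_T \psi\}.
\end{equation*}
The key claim is that $\Gamma\cup\{B\}$ is unsatisfiable in pure first-order logic, i.e.\ modulo $T_\emptyset$. Given the claim, compactness yields finitely many $\psi_1,\dots,\psi_k\in\Gamma$ with $\psi_1\wedge\dots\wedge\psi_k\wedge B\models\bot$. Then $I_1=\psi_1\wedge\dots\wedge\psi_k$ satisfies $A\models_T I_1$ and has the required signature.

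To prove the claim, suppose $N\models\Gamma\cup\{B\}$.

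\emph{Step 1.} The set $\mathrm{Th}_{L_1}(N)\cup\{A\}$ is $T$-satisfiable. Otherwise, by compactness, some $\chi\in\mathrm{Th}_{L_1}(N)$ gives $A\models_T\neg\chi$. Then $\neg\chi\in\Gamma$, so $N\models\neg\chi$, a contradiction. This step needs compactness relative to $T$, which holds once $T$ is first-order axiomatizable (see below).

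\emph{Step 2.} Step 1 gives an expansion $M$ of a structure in $\mathcal{S}_T$ with $M\models A$ and $M\equiv_{L_1}N$. The complete theories $\mathrm{Th}_{\sig(A)\cup\Sigma_T}(M)$ and $\mathrm{Th}_{\sig(B)}(N)$ agree on their common language, which is exactly $L_1$.

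\emph{Step 3.} By Robinson's joint consistency theorem there is a structure $K$ satisfying $A$ and $B$ whose $(\sig(A)\cup\Sigma_T)$-reduct is elementarily equivalent to $M$. In particular, the $\Sigma_T$-reduct of $K$ is elementarily equivalent to a structure in $\mathcal{S}_T$. This must yield a contradiction with the $T$-unsatisfiability of $A\wedge B$.

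The main obstacle is exactly this last transfer, together with the compactness used in Step 1. \defref{def:theory} allows an arbitrary class $\mathcal{S}_T$, and membership in $\mathcal{S}_T$ is not in general preserved by elementary equivalence. I would therefore first make explicit the assumption, implicit in \cite{DBLP:conf/cade/KovacsV09}, that $T$ is given by first-order axioms: $\mathcal{S}_T$ is the class of models of $\mathrm{Th}(\mathcal{S}_T)$, or at least closed under elementary equivalence.

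Under that assumption, a cleaner route avoids model theory altogether. Let $\Theta$ be the (first-order) theory of $T$. $T$-unsatisfiability of $A\wedge B$ means $\Theta\cup\{A,B\}$ is unsatisfiable, so by compactness there is a finite conjunction $\theta$ of axioms with $(A\wedge\theta)\wedge B\models\bot$. Ordinary Craig interpolation for the pure first-order pair $A\wedge\theta$ and $B$ (\refchapter{chapter:predicate}) gives $I_1$ over $(\sig(A)\cup\Sigma_T)\cap\sig(B)$ with $A\wedge\theta\models I_1$ (hence $A\models_T I_1$) and $I_1\wedge B\models\bot$. For $I_2$, instead attach $\theta$ to $B$ and interpolate $A$ against $\theta\wedge B$. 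This gives $A\models I_2$, $I_2\wedge B\models_T\bot$ and $\sig(I_2)\subseteq\sig(A)\cap(\sig(B)\cup\Sigma_T)$.

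This second route is the one I would actually write down. I would note that without the axiomatizability assumption the statement can fail.
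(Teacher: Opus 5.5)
Your second route is correct and is essentially the argument of the cited work of Kovács and Voronkov; the chapter itself gives no proof and only cites it. Compactness turns $T$-unsatisfiability into unsatisfiability of $\theta \wedge A \wedge B$ for a finite conjunction $\theta$ of axioms, which you attach to $A$ to obtain $I_1$ and to $B$ to obtain $I_2$ before applying ordinary Craig interpolation. You are also right that this needs $\mathcal{S}_T$ to be the class of models of a first-order theory $\Theta$, which \defref{def:theory} does not guarantee.

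One correction: your fallback hypothesis ``or at least closed under elementary equivalence'' is not sufficient. The class of all finite structures (with $\Sigma_T = \emptyset$) is closed under elementary equivalence. Let $A$ say that $g$ maps $P$ injectively into $Q$, and let $B$ say that $h$ maps $Q$ injectively but not surjectively into $P$. Then $A \wedge B$ has no finite model. Yet any separator over $\{P, Q\}$ would have to define $|P| \leq |Q|$ on finite structures, which no first-order sentence can do. So the hypothesis you actually need, and the one route 2 uses when it equates $T$-unsatisfiability with unsatisfiability of $\Theta \cup \{A, B\}$, is $\mathcal{S}_T = \mathrm{Mod}(\Theta)$.
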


The theorem provides a stronger result about the existence of
separators than required by \defref{def:separator}. The
sentences~$I_1, I_2$ in the theorem are both $T$-separators according
to \defref{def:separator}, but we are even at liberty to assume that
one of the two entailment conditions is \emph{general}
entailment~$\models$, and only the other entailment is modulo the
theory~$T$. Similarly, the theorem provides stronger guarantees about
the signature used in $I_1, I_2$, as the formulas only have to refer
to interpreted symbols from $\Sigma_T$ that also occur in one of the
two formulas~$A, B$.

Although it provides a general guarantee for the existence of reverse
interpolants, \thmref{thm:separators} is usually not strong enough for
applications in verification, as the theorem does not make any
statement about the occurrence of \emph{quantifiers} in $I_1$ and $I_2$. In
general, even if the formulas~$A, B$ are quantifier-free, there might
only be $T$-separators containing quantifiers, which is usually less
desirable for applications.\todo{amalgamation}

\begin{example}
  \label{ex:arrays}
  McCarthy proposed a theory of functional arrays that is today widely
  used in program verification~\cite{mccarthy:verif}. The theory
  provides functions~$\mathit{select}(M, a)$ for reading from an
  array~$M$ at index~$a$, and $\mathit{store}(M, a, x)$ for updating
  an array~$M$ at index~$a$ to the new value~$x$. In the theory of
  arrays, the conjunction
  \begin{equation*}
    \underbrace{
      M' = \mathit{store}(M, a, x)}_A \wedge
    \underbrace{
      b \not= c \wedge
      \mathit{select}(M, b) \not= \mathit{select}(M', b) \wedge
      \mathit{select}(M, c) \not= \mathit{select}(M', c)}_B
  \end{equation*}
  is unsatisfiable, but all reverse interpolants contain
  quantifiers~\cite{DBLP:journals/tcs/McMillan05}. Indeed,
  interpolants can only refer to the symbols~$M', M$, which occur in
  both $A$ and $B$, as well as the array functions~$\mathit{select}$
  and $\mathit{store}$. A quantified interpolant is
  $\exists a, x.\, M' = \mathit{store}(M, a, x)$.
  \lipicsEnd
\end{example}

\begin{definition}[General quantifier-free
  interpolation~\cite{DBLP:journals/tocl/BruttomessoGR14}]
  A theory~$T$ admits \emph{general quantifier-free interpolation} if
  for every $T$-unsatisfiable conjunction~$A \wedge B$ there is a
  quantifier-free $T$-separator.
\end{definition}

As the example illustrates, the standard theory of arrays does not
admit general quantifier-free interpolation, while many other theories
(for instance, Linear Real Arithmetic, LRA, \secref{sec:lra}) have
this property. It is possible to extend the theory of arrays, however,
to obtain quantifier-free
interpolation~\cite{DBLP:conf/rta/BruttomessoGR11,DBLP:conf/popl/TotlaW13,DBLP:conf/cade/HoenickeS18}
(also see \secref{sec:proofBased}).

It is often useful to consider a somewhat weaker criterion for
quantifier-free interpolation:

\begin{definition}[Plain quantifier-free
  interpolation~\cite{DBLP:journals/tocl/BruttomessoGR14}]
  A theory~$T$ admits \emph{plain quantifier-free interpolation} if
  for every $T$-unsatisfiable conjunction~$A \wedge B$ that only
  contains symbols from $\Sigma_T$ and constant symbols there is a
  quantifier-free $T$-separator.
\end{definition}

The difference between ``general'' and ``plain'' is that the former
demands quantifier-free interpolants even for
conjunctions~$A \wedge B$ that contain relation or function symbols
that are not part of $T$, whereas the latter is happy with only
additional constant symbols.

\begin{example}
  \label{ex:lia-general-qf-interpolation}
  The theory of linear integer arithmetic (LIA) admits plain
  quantifier-free interpolants, provided that the theory includes
  divisibility statements~$k \mid \cdot$ for every $k \in \N$; a
  formula~$k \mid t$ is read ``$k$ divides $t$'' and asserts that $t$
  is a multiple of $k$. Plain quantifier-free interpolation
  follows thanks to the observation that weakest and strongest interpolants in LIA
  can be computed using quantifier
  elimination~\cite{DBLP:books/daglib/0022394}, in exactly the same
  way as was shown for propositional logic in
  \refchapter{chapter:propositional}. Whether LIA admits
  \emph{general quantifier-free interpolation} depends on the exact
  signature of the theory that is assumed. To see this, consider the
  unsatisfiable conjunction
  \begin{equation*}
    \underbrace{
      x = 2\cdot y \wedge p(y)}_A \wedge
    \underbrace{
      x = 2\cdot z \wedge \neg p(z)}_B
  \end{equation*}
  where $p$ is a relation symbol that is not in the signature of LIA.
  A separator for the conjunction~$A \wedge B$ is the
  formula~$I = p(x \div 2)$, where $\div$ denotes division with
  remainder. If the function~$\div$ is not included in the signature
  of LIA, no quantifier-free separators
  exist~\cite{DBLP:conf/vmcai/BrilloutKRW11}, and the theory loses the
  general quantifier-free interpolation property.
  \lipicsEnd
\end{example}

\subsubsection{Extended Forms of Craig Interpolation}
\label{sec:extendedCraig}

The notion of separators (or reverse Craig interpolants) defined in
the previous section is sometimes called \emph{binary interpolation,}
as only two formulas~$A, B$ are involved. It is possible to define
corresponding $n$-ary versions, in which an arbitrary number of
formulas can be separated, which is often advantageous in
verification. The two main extended notions of interpolation that are
used today, and that are commonly supported by SMT solvers, are
\emph{sequence} and \emph{tree} interpolation.

\begin{definition}[Sequence
  interpolant~\cite{DBLP:conf/popl/HenzingerJMM04,DBLP:conf/cav/McMillan06}]
  \label{def:sequenceInterpolants}
  Suppose $T$ is a theory and $A_1 \wedge \cdots \wedge A_n$ a
  sentence. A \emph{sequence interpolant modulo $T$} or
  \emph{inductive sequence of interpolants modulo $T$} for
  $A_1 \wedge \cdots \wedge A_n$ is a sequence~$I_0, I_1, \ldots, I_n$
  of sentences such that:
  \begin{itemize}
  \item $I_0 = \top, I_n = \bot$,
  \item for all $i \in \{1, \ldots, n\}$:
    $I_{i-1} \wedge A_i \models_T I_i$, and
  \item for all $i \in \{0, \ldots, n\}$:
    $\sig(I_i) \subseteq
    (\sig(A_1 \wedge \cdots \wedge A_i) \cap \sig(A_{i+1} \wedge \cdots \wedge
    A_n)) \cup \Sigma_T$.
  \end{itemize}
\end{definition}

It is easy to see that a separator~$I$ for $A_1 \wedge A_2$
corresponds to the sequence interpolant~$\top, I, \bot$, so that
sequence interpolation strictly extends binary interpolation. It
should be noted, as well, that the individual formulas of a sequence
interpolant depend on each other, as each formula~$I_{i-1}$ has to be
strong enough to imply, together with $A_i$, the next
formula~$I_i$. Sequence interpolants can be derived by repeatedly
computing binary interpolants: we set $I_0 = \top$ and then compute
each formula~$I_i$, for $i = \{1, \ldots, n\}$, by computing a
separator for the
conjunction~$(I_{i-1} \wedge A_i) \wedge (A_{i+1} \wedge \cdots \wedge
A_n)$. For reasons of efficiency, this iterative computation of
sequence interpolants is rarely used, however, SMT solvers instead
implement interpolation procedures that can compute the whole sequence
interpolant in one go.

\begin{example}
  As we observed in \secref{sec:motivating} and \figref{fig:fibVC}, a
  single intermediate assertion for a program can be derived using a
  binary interpolant by translating the statements on a program path
  to an implication~$A \to C$. We can use sequence interpolants to
  derive multiple consecutive assertions in a program. For this,
  consider again the version of the Fibonacci example shown in
  \figref{fig:motivatingExample2}, in which the loop was unwound two
  times with the goal of finding intermediate
  assertions~$K_0, K_1, K_2$. The locations of the intermediate
  assertions in the program indicate that we should now group together
  the program statements to obtain four conjuncts; the
  assertion~\inl!a >= 0! is negated to state that we are
  interested in runs violating the assertion:
  \begin{equation*}
    \underbrace{\raisebox{-4ex}{}
      \begin{array}{@{}l@{}}
      n_0 \geq 0 \wedge a_0 = 0 \wedge\mbox{}\\
      b_0 = 1 \wedge i_0 = 0
    \end{array}}_{A_1}
  ~\wedge~
    \underbrace{\raisebox{-4ex}{}
    \begin{array}{@{}l@{}}
      i_0 < n_0 \wedge
      t_1 = b_0 \wedge\mbox{}\\
      b_1 = a_0 + b_0 \wedge\mbox{}\\ a_1 = t_1 \wedge
      i_1 = i_0 + 1
    \end{array}}_{A_2}
  ~\wedge~
    \underbrace{\raisebox{-4ex}{}
     \begin{array}{@{}l@{}}
      i_1 < n_0  \wedge
      t_2 = b_1 \wedge\mbox{}\\
      b_2 = a_1 + b_1 \wedge\mbox{}\\ a_2 = t_2 \wedge
      i_2 = i_1 + 1
    \end{array}}_{A_3}
  ~\wedge~
    \underbrace{\raisebox{-4ex}{}
     \begin{array}{@{}l@{}}
       i_2 \not< n_0 \wedge
        a_2 \not\geq 0
    \end{array}}_{A_4}
  \end{equation*}
  The conjunction~$A_1 \wedge A_2 \wedge A_3 \wedge A_4$ is
  unsatisfiable, and a sequence interpolant is given by the formulas
  \begin{equation*}
    \top,\; a_0 + b_0 \geq 1,\; b_1 \geq 1,\; a_2 \geq 1,\; \bot~.
  \end{equation*}
  The formulas in the sequence interpolant can be translated to
  intermediate program assertions:
  $
    K_0 = \text{\inl!a + b >= 1!},\;
    K_1 = \text{\inl!b >= 1!},\;
    K_2 = \text{\inl!a >= 1!}
    $.
  \lipicsEnd
\end{example}


Tree interpolants generalize inductive sequences of interpolants, and
are designed with inter-procedural program verification in mind. In
this context, the tree structure of the interpolation problem
corresponds to the call graph of a program. Tree interpolants make it
possible to derive not only intermediate assertions in programs, but
also annotations that relate multiple program states, for instance
postconditions~\cite{DBLP:conf/popl/HeizmannHP10,DBLP:conf/aplas/GuptaPR11}
(also see \secref{sec:recursive}).

\begin{definition}[Tree interpolant~\cite{DBLP:conf/popl/HeizmannHP10,DBLP:journals/fmsd/RummerHK15,DBLP:conf/atva/GurfinkelRS13}]
  Suppose $T$ is a theory and $G = (V, E)$ a finite rooted tree,
  writing $E(v, w)$ to express that the node~$w$ is a direct child of
  $v$. Further, suppose that each node~$n \in V$ of the tree is
  labeled with a sentence~$A(n)$. A function~$I$ that labels every
  node~$n \in V$ with a sentence~$I(n)$ is called a \emph{tree
    interpolant modulo $T$} (for $G$ and $A$) if the following
  properties hold:
  \subdefinition{treeInt1}
 for the root node~$v_0 \in V$, it is the case that
 $I(v_0) = \bot$,
 \subdefinition{treeInt2}
 for every node~$v \in V$, the following entailment holds:
 \begin{equation*}
   \qquad\quad
    \bigwedge_{\substack{w \in V \\ (v, w) \in E}} I(w)
    \wedge A(v)
    ~\models_T~
    I(v)~,
  \end{equation*}
 \subdefinition{treeInt3}
 for every node~$v \in V$, it is the case that:
 \begin{equation*}
   \qquad\quad
    \sig(I(v)) ~\subseteq~
    \Big(
    \bigcup_{\substack{w \in V\\E^*(v, w)}} \sig(A(w)) ~~\cap
    \bigcup_{\substack{w \in V\\\neg E^*(v, w)}} \sig(A(w))
    \Big) \cup
    \Sigma_T
  \end{equation*}
  \hspace{3em} 
  where $E^*$ is the reflexive transitive closure of $E$.
\end{definition}

Like in the case of sequence interpolants, also tree interpolants can
be derived by repeatedly computing binary
interpolants~\cite{DBLP:journals/fmsd/RummerHK15}, although SMT
solvers usually provide native support for tree interpolants as well.

Tree interpolation generalizes all of the more specialized forms of
interpolation we have encountered so far. To see that tree
interpolation subsumes sequence interpolation, note that an $n$-ary
conjunction~$A_1 \wedge \cdots \wedge A_n$ can be translated to a
tree~$G_1 = (\{1, \ldots, n\}, \{(i, i-1) \mid 2 \leq i \leq n\})$ in
which each node~$i$ is the child of node~$i+1$ and is labeled with the
formula~$A_i$. The root of the tree is the node~$n$. Similarly, we can
reduce \emph{parallel
interpolants}~\cite{Kourousias_Makinson_2007,DBLP:journals/synthese/Parikh11},
as introduced in \refchapter{chapter:propositional}, to tree
interpolants by arranging an
implication~$\varphi_1 \wedge \cdots \wedge \varphi_n \to \psi$ as a
tree~$G_2 = (\{0, 1, \ldots, n\}, \{(0, i) \mid 1 \leq i \leq n\})$ in
which the root~$0$ is labeled with $\neg\psi$ and each child~$i$ with
the formula~$\varphi_i$. Parallel interpolants are related to the
notion of \emph{symmetric interpolants}~\cite{DBLP:journals/lmcs/JhalaM07},
which can be similarly reduced to tree interpolation.

Tree interpolants can be generalized further to \emph{disjunctive
  interpolants}~\cite{DBLP:conf/cav/RummerHK13}, which have, however,
not found wider adoption in verification. It has also been observed
that the different variants of interpolation correspond to solving
particular recursion-free fragments of constrained Horn
clauses~\cite{DBLP:journals/fmsd/RummerHK15}, which is a connection
that we return to in \secref{sec:chc}.


\subsection{Interpolation Modulo Linear Arithmetic}

\contents{
\item Linear rational arithmetic (LRA)
\item Linear integer arithmetic (LIA)
\item Discussion of feasible interpolation for LIA
\item top-down vs. bottom-up
\item (connect to chapter on proof theory)
\item Strict inequalities
\item Other examples: arrays
}

We now survey some of the techniques that have been developed for
constructing quantifier-free interpolants modulo a theory~$T$,
focusing on the case of \emph{linear arithmetic} that is ubiquitous in
verification. For arithmetic formulas, interpolants are commonly
\emph{extracted from proofs} in some suitable proof system, resembling
the approaches that were discussed in earlier chapters. For
verification, extracting interpolants from proofs is attractive, since
proofs tend to focus on the relevant information in a verification
condition. When computing intermediate assertions (as discussed in
\secref{sec:motivating}) using proof-based interpolation, only the
program variables will be mentioned that were important for the SMT
solver to prove the correctness of the program. Interpolants from
proofs are therefore a useful tool when computing program
abstractions.


\subsubsection{Linear Real Arithmetic (LRA)}
\label{sec:lra}

One of the most common theories supported by interpolating SMT solvers
is \emph{linear arithmetic} over real numbers or the
integers~\cite{DBLP:books/daglib/0090562,DBLP:conf/cav/DutertreM06}. In
the theory of \emph{Linear Real Arithmetic,} LRA, we consider formulas
that contain linear
inequalities~$0 \leq \alpha_1 x_1 + \cdots + \alpha_k x_k + \alpha_0$,
where $\alpha_0, \alpha_1, \ldots, \alpha_k \in \Q$ are rational
numbers and $x_1, \ldots, x_k$ are constant symbols of
sort~$\mathit{Real}$. The semantics of scaling with some rational
number~$\alpha x$, addition~$+$, and the relation symbol~$\leq$ is as
expected.
We initially only consider formulas that are conjunctions of positive
inequalities~$0 \leq \alpha_1 x_1 + \cdots + \alpha_k x_k+ \alpha_0$.
Equations~$s = t$ between real-valued terms can be translated to a
conjunction~$0 \leq s - t \wedge 0 \leq t - s$ of two inequalities,
while the handling of strict inequalities~$0 < s$, disjunctions, and
negated equations is discussed in \secref{sec:beyondIneqs}.


To interpolate linear inequalities, we rely on a classical result in
linear programming:
\begin{lemma}[Farkas' lemma~\cite{DBLP:books/daglib/0090562}]
  \label{lem:farkas}
  Let $A \in \R^{m \times n}$ be a real-valued matrix and $b \in \R^m$
  be a column vector. There is a column vector~$x \in \R^n$ such that
  $A x \leq b$ if and only if for every row vector~$y \in \R^m$ with
  $y \geq 0$ and $y A = 0$ it is the case that $yb \geq 0$.
\end{lemma}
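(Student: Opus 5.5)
The \emph{only if} direction is immediate, and I would prove the \emph{if} direction by contraposition, using Fourier--Motzkin elimination and induction on the number~$n$ of columns. For the \emph{only if} direction, suppose $Ax \leq b$, $y \geq 0$ and $yA = 0$. Multiplying the inequality system by the non-negative row vector~$y$ preserves it, so $yb \geq y(Ax) = (yA)x = 0$.

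For the converse, I show: if $Ax \leq b$ has no solution, then there is a row vector $y \geq 0$ with $yA = 0$ and $yb < 0$. The induction is on~$n$.
\begin{itemize}
\item \emph{Base case $n = 0$.} The system reduces to $0 \leq b_i$ for all rows~$i$. Infeasibility means some $b_i < 0$, and the unit vector $y = e_i$ works.
\item \emph{Step $n > 0$.} Split the row indices into $P = \{i \mid a_{in} > 0\}$, $N = \{i \mid a_{in} < 0\}$ and $Z = \{i \mid a_{in} = 0\}$. Build a new system $A'x' \leq b'$ over the first $n-1$ variables. It keeps every row $i \in Z$ (dropping its last entry). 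For each pair $(p,q) \in P \times N$ it adds the row $(-a_{qn})\cdot(\text{row } p) + a_{pn}\cdot(\text{row } q)$ of $(A \mid b)$, whose last coefficient cancels to~$0$.
\end{itemize}
By construction there is a matrix $M \geq 0$ such that $MA = (A' \mid 0)$ and $Mb = b'$.

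The key lemma, and the main point needing care, is the correctness of elimination: if $A'x' \leq b'$ has a solution~$x'$, then $Ax \leq b$ has a solution extending~$x'$. The argument runs as follows.
\begin{itemize}
\item Each row $p \in P$ gives an upper bound on~$x_n$, namely $x_n \leq u_p(x') = (b_p - \sum_{j<n} a_{pj}x'_j)/a_{pn}$.
\item Each row $q \in N$ gives a lower bound $x_n \geq \ell_q(x')$, defined analogously.
\item The combined row for $(p,q)$, divided by the positive number $-a_{qn}a_{pn}$, is exactly the inequality $\ell_q(x') \leq u_p(x')$.
\item Hence $\max_q \ell_q(x') \leq \min_p u_p(x')$, with the conventions $\max\emptyset = -\infty$ and $\min\emptyset = +\infty$, so some real $x_n$ lies in between.
\item Rows in~$Z$ hold because they are rows of~$A'$.
\end{itemize}
Checking the sign bookkeeping in this step, including the edge cases where $P$ or $N$ is empty, is where I expect the only real work.

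Given the lemma, infeasibility of $Ax \leq b$ implies infeasibility of $A'x' \leq b'$. The induction hypothesis then yields $y' \geq 0$ with $y'A' = 0$ and $y'b' < 0$. Put $y = y'M$. Then $y \geq 0$ as a product of non-negative factors, and $yA = y'(A' \mid 0) = 0$ and $yb = y'b' < 0$, which completes the induction.
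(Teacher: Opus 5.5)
Your proof is correct and complete. The elimination lemma, including the conventions for empty $P$ or $N$, checks out, as do the non-negativity of $M$ and the induction on $n$ for arbitrary $m$.

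The paper gives no proof of its own; it imports the lemma from the cited linear-programming text. There, as far as I recall, the standard route is to apply the cone form of Farkas' lemma ($Ax = b$ has a solution $x \geq 0$ iff $yb \geq 0$ whenever $yA \geq 0$) to the matrix $[I\;A\;-A]$. The cone form in turn comes from the fundamental theorem of linear inequalities, which is proved by a finite pivoting argument.

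Your Fourier--Motzkin induction is more elementary and self-contained, and it suits this chapter well. Every multiplier you produce arises from entries of $A$ by ring operations, and the argument works over any ordered field. So the paper's follow-up remark, that $y$ can be chosen rational when $A$ and $b$ are rational, becomes immediate. Moreover, each new row of $A'$ is exactly a \ruleName{Comb} step with the positive coefficients $-a_{qn}$ and $a_{pn}$, followed by \ruleName{Simp}. The contradictory row $0 \leq b_i$ with $b_i < 0$ in the base case is then closed by \ruleName{Con}. Your induction therefore directly builds a derivation of $\Gamma \vdash \bot$ in the calculus of \tabref{tab:lra-rules}, which is exactly the completeness claim the paper derives from the lemma.

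In return, the pivoting route proves a stronger theorem, which also underlies LP duality and the structure theory of polyhedra. It also avoids the blow-up of elimination: each Fourier--Motzkin step can increase the number of rows quadratically. That is harmless for an existence proof, but it is why solvers instead read the Farkas multipliers off a simplex tableau.
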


The lemma implies a simple uniform representation of witnesses for the
unsatisfiability of linear inequalities over the reals: whenever a
conjunction~$\bigwedge_i 0 \leq t_i$ of linear inequalities is
unsatisfiable, we can form a non-negative linear
combination~$0 \leq \sum_i \alpha_i t_i$ of the inequalities in which
the right-hand side~$\sum_i \alpha_i t_i$ simplifies to a constant
negative value~$\beta < 0$. If the matrix~$A$ and the vector~$b$ in
the lemma only contain rational numbers, then also the vector~$y$ can
be chosen to be rational.

\begin{lemma}[Interpolants for linear
  inequalities~\cite{DBLP:journals/jsyml/Pudlak97,DBLP:journals/tcs/McMillan05}]
  \label{lem:LRA-interpolants}
  Suppose that
  \begin{equation*}
    \underbrace{\bigwedge_i 0 \leq t_i}_A ~\wedge~
    \underbrace{\bigwedge_j 0 \leq s_j}_B
  \end{equation*}
  is a conjunction of linear inequalities, and that the linear
  combination~$\sum_i \alpha_i t_i + \sum_j\beta_j s_j$, with
  $\alpha_i, \beta_j \geq 0$, is constant and negative. Then
  $I = \big(0 \leq \sum_i \alpha_i t_i\big)$, after simplification and
  removing terms with coefficient~$0$, is an LRA-separator for
  $A \wedge B$.
\end{lemma}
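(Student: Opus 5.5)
We verify the three conditions of \defref{def:separator} for $I = \big(0 \leq \sum_i \alpha_i t_i\big)$ directly; the checks are elementary, and the only delicate point is the signature condition.

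\emph{Entailment from $A$.} Every structure of LRA (or an extension of one) that satisfies $A$ makes each $t_i$ evaluate to a non-negative real. Since all $\alpha_i \geq 0$, the sum $\sum_i \alpha_i t_i$ is then non-negative as well. Hence $A \models_{\mathrm{LRA}} I$. Simplifying the term, by collecting coefficients of each constant and dropping summands whose coefficient became $0$, preserves its value in every structure. So this step is unaffected by simplification.

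\emph{Unsatisfiability with $B$.} Suppose some structure satisfies both $I$ and $B$. Then $\sum_i \alpha_i t_i \geq 0$ and, since $\beta_j \geq 0$, also $\sum_j \beta_j s_j \geq 0$. Adding these gives $\sum_i \alpha_i t_i + \sum_j \beta_j s_j \geq 0$. By hypothesis this term is identically equal to a negative constant, which is a contradiction. So $I \wedge B \models_{\mathrm{LRA}} \bot$.

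\emph{Signature condition.} This is the step I expect to need the most care. The interpreted symbols (rational scaling, $+$, $\leq$) lie in $\Sigma_{\mathrm{LRA}}$, so only the uninterpreted constants $x$ remaining in the simplified $I$ matter. Any such $x$ occurs in some $t_i$, so $x \in \sig(A)$. Write $c^A_x$ for the coefficient of $x$ in $\sum_i \alpha_i t_i$ and $c^B_x$ for its coefficient in $\sum_j \beta_j s_j$. Because the total combination is constant, $c^A_x + c^B_x = 0$. Because $x$ survives simplification, $c^A_x \neq 0$, and therefore $c^B_x \neq 0$. Thus $x$ occurs in some $s_j$, so $x \in \sig(B)$.

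This argument depends on ``removing terms with coefficient~$0$'': constants of $A$ that cancel within the $A$-part must disappear. Otherwise a local symbol of $A$ could appear syntactically in $I$, violating the condition $\sig(I) \subseteq (\sig(A)\cap\sig(B)) \cup \Sigma_{\mathrm{LRA}}$. Together, the three checks show that $I$ is an LRA-separator for $A \wedge B$.
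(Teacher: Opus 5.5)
Your proposal is correct and takes the same route as the paper's proof. The signature argument is identical: a constant surviving with non-zero coefficient in $\sum_i \alpha_i t_i$ must be cancelled by $\sum_j \beta_j s_j$, so it lies in $\sig(A)\cap\sig(B)$. The only difference is that you spell out the two entailment checks, which the paper dismisses as holding ``by construction''.
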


\begin{example}
  \label{ex:lra-1}
  We first illustrate this result using an example. Consider the
  conjunction
  \begin{equation*}
    \underbrace{0 \leq y - 1 \wedge 0 \leq z - x - 2y - 2}_A ~\wedge~
    \underbrace{0 \leq x \wedge 0 \leq -z + 2}_B~.
  \end{equation*}
  This sentence is unsatisfiable, since we can form a non-negative
  contradictory linear combination of the inequalities:
  \begin{equation*}
    0 \leq 2 \cdot (y - 1) + (z - x -2y - 2) + x + (-z + 2) ~\equiv~
    0 \leq -2~.
  \end{equation*}
  We can therefore construct an LRA-separator~$I$ by considering the
  corresponding linear combination of the inequalities in $A$. Note
  that the linear combination, before simplification, still contained
  the symbol~$y$ that is not common to $A$ and $B$. After
  simplification, this symbol has disappeared:
  \begin{equation*}
    0 \leq 2 \cdot (y - 1) + (z - x -2y - 2) ~\equiv~
    \underbrace{0 \leq z - x - 4}_I~.
  \end{equation*}
  \lipicsEnd
\end{example}

\begin{proof}[Proof of \lemref{lem:LRA-interpolants}]
  The properties~$A \models_{\text{LRA}} I$ and
  $I \wedge B \models_{\text{LRA}} \bot$ hold by construction. To see
  that
  $\sig(I) \subseteq (\sig(A) \cap \sig(B)) \cup \Sigma_{\text{LRA}}$,
  note that $I$ can only contain constant symbols that occur in
  $A$. Suppose now that some constant symbol~$x$ occurs in
  $\sum_i \alpha_i t_i$ with non-zero coefficient. Because
  $\sum_i \alpha_i t_i + \sum_j\beta_j s_j$ is constant, the
  symbol~$x$ then also has to occur in $\sum_j\beta_j s_j$, and $x$ is
  therefore in $\sig(A) \cap \sig(B)$.
\end{proof}

\begin{table}[p]
  \begin{tabularx}{\linewidth}{|X|}
    \hline
  \begin{subtable}{1.0\linewidth}
  \begin{equation*}
    \begin{array}{c@{\qquad}c}
      \prule
      {}
      {\naturalSeq{\Gamma}{0 \leq t}}
      {\ruleName{Hyp}}
      {$0 \leq t \in \Gamma$}
    &
       \prulee
       {\naturalSeq{\Gamma}{0 \leq s}}
       {\naturalSeq{\Gamma}{0 \leq t}}
       {\naturalSeq{\Gamma}{0 \leq \alpha s + \beta t}}
       {\ruleName{Comb}}
       {$\alpha, \beta > 0$}
    \\[3ex]
      \prule
      {\naturalSeq{\Gamma}{0 \leq \alpha}}
      {\naturalSeq{\Gamma}{\bot}}
      {\ruleName{Con}}
      {$\alpha < 0$}
    &
      \prule
      {\naturalSeq{\Gamma}{0 \leq s}}
      {\naturalSeq{\Gamma}{0 \leq t}}
      {\ruleName{Simp}}
      {$s \leadsto t$}
    \end{array}
  \end{equation*}

  \caption{Proof rules for linear real arithmetic. The
    arrow~$\leadsto$ denotes simplification of an expression.}
  \label{tab:lra-rules}
  \end{subtable}
  \\\hline
  \begin{subtable}{1.0\linewidth}
  \begin{equation*}
    \begin{array}{c@{\qquad}c}
      \prule
      {}
      {\naturalIntSeq{(A, B)}{0 \leq t}{t}}
      {\ruleNameInt{Hyp-A}}
      {$0 \leq t \in A$}
      &
      \prule
      {}
      {\naturalIntSeq{(A, B)}{0 \leq t}{0}}
      {\ruleNameInt{Hyp-B}}
      {$0 \leq t \in B$}
      \\[2.5ex]
      \multicolumn{2}{c}{
       \prulee
       {\naturalIntSeq{(A, B)}{0 \leq s}{s'}}
       {\naturalIntSeq{(A, B)}{0 \leq t}{t'}}
       {\naturalIntSeq{(A, B)}{0 \leq \alpha s + \beta t}{\alpha s' +
      \beta t'}}
       {\ruleNameInt{Comb}}
      {$\alpha, \beta > 0$}
      }
    \\[3.5ex]
      \prule
      {\naturalIntSeq{(A, B)}{0 \leq \alpha}{t'}}
      {\intSeq{(A, B)}{0 \leq t'}}
      {\ruleNameInt{Con}}
      {$\alpha < 0$}
      &
      \prulet
      {\naturalIntSeq{(A, B)}{0 \leq s}{s'}}
      {\naturalIntSeq{(A, B)}{0 \leq t}{t'}}
      {\ruleNameInt{Simp}}
      {
        (s \leadsto t, \\ \;s' \leadsto t')
      }
    \end{array}
  \end{equation*}

  \caption{Interpolating proof rules for linear real arithmetic.}
  \label{tab:lra-interpolating-rules}
  \end{subtable}
  \\\hline
  \begin{subtable}{1.0\linewidth}
  \begin{equation*}
    \begin{array}{c}
      \prulee
      {\intSeq{(A \cup \{ K_1\}, B)}{I}}
      {\intSeq{(A \cup \{ K_2\}, B)}{J}}
      {\intSeq{(A \cup \{ K_1 \vee K_2 \}, B)}{I \vee J}}
      {\ruleNameInt{Disj-A}}
      {}
      \\[3.5ex]
      \prulee
      {\intSeq{(A, B \cup \{ K_1\})}{I}}
      {\intSeq{(A, B \cup \{ K_2\})}{J}}
      {\intSeq{(A, B \cup \{ K_1 \vee K_2 \})}{I \wedge J}}
      {\ruleNameInt{Disj-B}}
      {}
    \end{array}
  \end{equation*}

  \caption{Interpolating proof rules for disjunctions.}
  \label{tab:disj-interpolating-rules}
  \end{subtable}
  \\\hline
  \begin{subtable}{1.0\linewidth}
  \begin{equation*}
    \begin{array}{l}
      \prulee
      {\naturalSeq{\Gamma \cup \{0 \leq n - x\}}{\bot}}
      {\naturalSeq{\Gamma \cup \{0 \leq x - n - 1\}}{\bot}}
      {\naturalSeq{\Gamma}{\bot}}
      {\ruleName{BnB}}
      {$x \in \sig(\Gamma), n \in \Z$}
      \\[3ex]
      \prulee
      {\naturalSeq{\Gamma \cup \{t = 0\}}{\bot}}
      {\naturalSeq{\Gamma \cup \{0 \leq t - 1\}}{\bot}}
      {\naturalSeq{\Gamma \cup \{0 \leq t\}}{\bot}}
      {\ruleName{Strengthen}}
      {}
      \\[3ex]
      \prule
      {\naturalSeq{\Gamma}{0 \leq \alpha_1 x_1 + \cdots + \alpha_k x_k
      + \alpha_0 }}
      {\naturalSeq{\Gamma}{0 \leq (\alpha_1/\beta) x_1 + \cdots + (\alpha_k/\beta) x_k
      + \lfloor\alpha_0 / \beta\rfloor}}
      {\ruleName{Div}}
      {$\beta = \gcd(\alpha_1, \ldots, \alpha_k), \beta > 0$}
    \end{array}
  \end{equation*}

  \caption{Proof rules for linear integer arithmetic.}
  \label{tab:lia-rules}
  \end{subtable}
  \\\hline
  \begin{subtable}{1.0\linewidth}
  \begin{equation*}
    \begin{array}{@{}l@{}}
      \prulee
      {\intSeq{(A \cup \{0 \leq n - x\}, B)}{I}}
      {\intSeq{(A \cup \{0 \leq x - n - 1\}, B)}{J}}
      {\intSeq{(A, B)}{I \vee J}}
      {\ruleNameInt{BnB-A}}
      {$x \in \sig(A), n \in \Z{}$}
      \\[4ex]
      \prulee
      {\intSeq{(A, B \cup \{0 \leq n - x\})}{I}}
      {\intSeq{(A, B \cup \{0 \leq x - n - 1\})}{J}}
      {\intSeq{(A, B)}{I \wedge  J}}
      {\ruleNameInt{BnB-B}}
      {$x \in \sig(B), n \in \Z$}
      \\[3.5ex]
      \prule
      {\naturalIntSeq{(A, B)}{0 \leq \alpha_1 x_1 + \cdots + \alpha_k x_k
      + \alpha_0 }{t'}}
      {\naturalIntSeq{(A, B)}{0 \leq (\alpha_1/\beta) x_1 + \cdots + (\alpha_k/\beta) x_k
      + \lfloor\alpha_0 / \beta\rfloor}{ t' \div  \beta }}
      {\ruleNameInt{Div}}
      {$\beta = \gcd(\alpha_1, \ldots, \alpha_k), \beta > 0$}
    \end{array}
  \end{equation*}

  \caption{Interpolating proof rules for linear integer arithmetic.}
  \label{tab:lia-interpolating-rules}
  \end{subtable}
  \\\hline
  \end{tabularx}

  \caption{Proof rules for linear real arithmetic (LRA) and linear
    integer arithmetic (LIA).}
\end{table}

\medskip
It is common to define this method of computing LRA-separators in
terms of proof rules. For this, we can first consider the proof rules
shown in \tabref{tab:lra-rules}, which form a calculus for deriving the
unsatisfiability of a set of
inequalities~\cite{DBLP:journals/tcs/McMillan05}. The proof rules are
defined in terms of \emph{sequents}~\shortNaturalSeq{\Gamma}{\phi}, in
which $\Gamma$ is a finite set of sentences (in our case, set of
inequalities), and $\phi$ is a single sentence. The semantics of a
sequent is that the conjunction of the formulas in $\Gamma$ entails
$\phi$. The proof rules express that the sequent underneath the bar
(the \emph{conclusion}) can be derived from the sequents  above the bar (the
\emph{premises}).

The rules~\ruleName{Hyp} and \ruleName{Comb} enable the derivation of
positive linear combinations of the given
inequalities~$\Gamma$. Rule~\ruleName{Con} detects unsatisfiable
inequalities, i.e., constant inequalities with a negative constant
term. The rule~\ruleName{Simp} simplifies inequalities with the help
of a rewriting relation~$\leadsto$, which can be implemented by
sorting the terms in an inequality and removing all terms with
coefficient~$0$~\cite{DBLP:conf/lpar/Rummer08}. An example proof is
given in \figref{fig:lra-example-proof}.

\begin{figure}[tb]
  \begin{gather*}
    \AxiomC{}
    \LeftLabel{\ruleName{Hyp}}
    \UnaryInfC{\naturalSeq{\Gamma}{0 \leq z - x - 2y - 2}}
    \AxiomC{}
    \LeftLabel{\ruleName{Hyp}}
    \UnaryInfC{\naturalSeq{\Gamma}{0 \leq x}}
    \LeftLabel{\ruleName{Comb}}
    \BinaryInfC{\naturalSeq{\Gamma}{0 \leq (z - x - 2y - 2) + x}}
    \LeftLabel{\ruleName{Simp}}
    \UnaryInfC{\naturalSeq{\Gamma}{0 \leq z -2y - 2}}
    \AxiomC{}
    \LeftLabel{\ruleName{Hyp}}
    \UnaryInfC{\naturalSeq{\Gamma}{0 \leq y - 1}}
    \LeftLabel{\ruleName{Comb}}
    \BinaryInfC{\naturalSeq{\Gamma}{0 \leq (z -2y - 2) + 2(y-1)}}
    \LeftLabel{\ruleName{Simp}}
    \UnaryInfC{\naturalSeq{\Gamma}{0 \leq z  - 4}}
    \UnaryInfC{$\mathcal{P}$}
    \DisplayProof
    \\[2ex]
    \hspace*{25ex}
    \AxiomC{$\mathcal{P}$}
    \AxiomC{}
    \LeftLabel{\ruleName{Hyp}}
    \UnaryInfC{\naturalSeq{\Gamma}{0 \leq -z + 2}}
    \LeftLabel{\ruleName{Comb}}
    \BinaryInfC{\naturalSeq{\Gamma}{0 \leq (z-4) + (-z+2)}}
    \LeftLabel{\ruleName{Simp}}
    \UnaryInfC{\naturalSeq{\Gamma}{0 \leq -2}}
    \LeftLabel{\ruleName{Con}}
    \UnaryInfC{\naturalSeq{\Gamma}{\bot}}
    \DisplayProof
  \end{gather*}
  
  \caption{Proof of unsatisfiability for
    $\Gamma = \{0 \leq y - 1, 0 \leq z - x - 2y - 2, 0 \leq x, 0 \leq
    -z + 2\}$ in \exref{ex:lra-1}.}
  \label{fig:lra-example-proof}
\end{figure}

It follows from \lemref{lem:farkas} that the calculus in
\tabref{tab:lra-rules} is sound and complete for showing
LRA-unsatisfiability: the sequent~$\shortNaturalSeq{\Gamma}{\bot}$ can
be derived if and only if $\Gamma$ is a finite LRA-unsatisfiable set
of inequalities.

\medskip
To derive interpolants, we augment the proof rules with
\emph{annotations} that reduce, in the root of a proof tree, to an
LRA-separator. The resulting interpolation proof rules are similar in
spirit to the interpolating proof rules for propositional logic
introduced in \refchapter{chapter:propositional}; like the
rules for propositional logic, interpolants are constructed
\emph{bottom-up,} starting at the leaves of a proof.  We follow the
style of interpolating proof rules in
\cite{DBLP:journals/tcs/McMillan05,DBLP:journals/corr/abs-1010-4422},
but streamline the formulation as we do not consider the combination
with other theories at the moment. The interpolating LRA proof rules
are given in \tabref{tab:lra-interpolating-rules}. For each of the proof
rules in \tabref{tab:lra-rules}, there are one or multiple interpolating
rules, now reasoning about two kinds of sequents. The simpler kind of
sequent has the shape~\shortIntSeq{(A, B)}{I} and contains two finite
sets~$A, B$ of sentences, as well as a single sentence~$I$. A
sequent~\shortIntSeq{(A, B)}{I} is considered valid if and only if $I$
is an LRA-separator for $\bigwedge A \wedge \bigwedge B$ according to
\defref{def:separator}. Validity of \shortIntSeq{(A, B)}{I} implies that
$A \cup B$ is LRA-unsatisfiable.

The second kind of sequent (also called \emph{inequality
  interpolation}~\cite{DBLP:journals/tcs/McMillan05}) has the shape
$\naturalIntSeq{(A, B)}{0 \leq t}{t'}$, again containing two
sets~$A, B$ of sentences, one inequality~$0 \leq t$, and one term~$t'$
forming the annotation. The intuition is that a sequent of this kind
corresponds to a non-interpolating
sequent~$\shortNaturalSeq{A \cup B}{0 \leq t}$, but in addition 
it records the contribution to $0 \leq t$ from $A$. A
sequent~$\naturalIntSeq{(A, B)}{0 \leq t}{t'}$ is valid if the
following conditions hold:
\begin{itemize}
\item $A \models_{\text{LRA}} 0 \leq t'$,
\item $B \models_{\text{LRA}} 0 \leq t - t'$, and
\item $\sig(t') \subseteq \sig(A) \cup \Sigma_{\text{LRA}}$ and
  $t - t' \leadsto s$ with
  $\sig(s) \subseteq \sig(B) \cup \Sigma_{\text{LRA}}$.
\end{itemize}
The last condition expresses that $t'$ is a term only containing
symbols from $A$, while the difference $t - t'$ can be simplified to a
term only containing symbols from $B$ (ignoring symbols
in~$\Sigma_{\text{LRA}}$). In other words, the annotation splits the
inequality~$0 \leq t$ into an $A$-part and a $B$-part. The proof rules
in \tabref{tab:lra-interpolating-rules} are formulated in such a way
that every sequent that can be derived is valid; for this,
\ruleNameInt{Hyp-A} and \ruleNameInt{Hyp-B} record whether an
inequality comes from $A$ or from $B$, \ruleNameInt{Comb} propagates
this information, and \ruleNameInt{Simp} simplifies the
annotation~$t'$ along with the derived formula~$0 \leq t$. The rule
\ruleNameInt{Con} concludes the derivation of an LRA-separator: if an
unsatisfiable inequality~$0 \leq t$ has been derived, the
annotation~$t'$ gives rise to a separator~$0 \leq t'$. The
interpolating version of the proof in \figref{fig:lra-example-proof}
is given in \figref{fig:lra-interpolating-example-proof} and produces
the same separator~$0 \leq z - x - 4$ as the reasoning in
\exref{ex:lra-1}.

\begin{figure}[tb]
  \begin{gather*}
    \AxiomC{}
    \LeftLabel{\ruleNameInt{Hyp-A}}
    \UnaryInfC{\naturalIntSeq{\Gamma^I}{0 \leq z - x - 2y - 2}{z - x - 2y - 2}}
    \AxiomC{}
    \LeftLabel{\ruleNameInt{Hyp-B}}
    \UnaryInfC{\naturalIntSeq{\Gamma^I}{0 \leq x}{0}}
    \LeftLabel{\ruleNameInt{Comb}}
    \BinaryInfC{\naturalIntSeq{\Gamma^I}{0 \leq (z - x - 2y - 2) + x}{(z - x - 2y - 2) + 0}}
    \LeftLabel{\ruleNameInt{Simp}}
    \UnaryInfC{\naturalIntSeq{\Gamma^I}{0 \leq z -2y - 2}{z - x - 2y - 2}}
    \UnaryInfC{$\mathcal{P}_1$}
    \DisplayProof
    \\[2ex]
    \hspace*{5ex}
    \AxiomC{$\mathcal{P}_1$}
    \AxiomC{}
    \LeftLabel{\ruleNameInt{Hyp-A}}
    \UnaryInfC{\naturalIntSeq{\Gamma^I}{0 \leq y - 1}{y - 1}}
    \LeftLabel{\ruleNameInt{Comb}}
    \BinaryInfC{\naturalIntSeq{\Gamma^I}{0 \leq (z -2y - 2) + 2(y-1)}{(z - x - 2y - 2) + 2(y - 1)}}
    \LeftLabel{\ruleNameInt{Simp}}
    \UnaryInfC{\naturalIntSeq{\Gamma^I}{0 \leq z  - 4}{z - x - 4}}
    \UnaryInfC{$\mathcal{P}_2$}
    \DisplayProof
    \\[2ex]
    \hspace*{15ex}
    \AxiomC{$\mathcal{P}_2$}
    \AxiomC{}
    \LeftLabel{\ruleNameInt{Hyp-B}}
    \UnaryInfC{\naturalIntSeq{\Gamma^I}{0 \leq -z + 2}{0}}
    \LeftLabel{\ruleNameInt{Comb}}
    \BinaryInfC{\naturalIntSeq{\Gamma^I}{0 \leq (z-4) + (-z+2)}{(z - x - 4) + 0}}
    \LeftLabel{\ruleNameInt{Simp}}
    \UnaryInfC{\naturalIntSeq{\Gamma^I}{0 \leq -2}{z - x - 4}}
    \LeftLabel{\ruleNameInt{Con}}
    \UnaryInfC{\intSeq{\Gamma^I}{0 \leq z - x - 4}}
    \DisplayProof
  \end{gather*}
  
  \caption{Interpolating version of the proof in
    \figref{fig:lra-example-proof}.  We use the
    abbreviation~$\Gamma^I = (A, B) = (\{0 \leq y - 1, 0 \leq z - x -
    2y - 2\}, \{0 \leq x, 0 \leq -z + 2\})$.}
  \label{fig:lra-interpolating-example-proof}
\end{figure}

\subsubsection{Beyond Conjunctions of
  Non-strict Inequalities}
\label{sec:beyondIneqs}

Our discussion so far was restricted to conjunctions of non-strict
inequalities~$0 \leq t$ and equations~$s = t$. The formulation in
terms of interpolating proof rules leads to a flexible framework,
however, that can be extended to include further operators:
\begin{itemize}
\item \emph{Boolean structure} in formulas can be handled by
  introducing further proof rules according to the principles
  developed in \refchapter{chapter:prooftheory}. For completeness, we
  show the interpolating proof rules for splitting disjunctions in the
  $A$- and $B$-sets in \tabref{tab:disj-interpolating-rules}. The
  rules introduce a disjunction or conjunction in the interpolant,
  depending on whether a formula from $A$ or from $B$ is
  considered. Similar rules can be defined for conjunctions and
  negations.

  It is also generally possible to combine an interpolation procedure
  for a theory~$T$ with the propositional methods from
  \refchapter{chapter:propositional}, resulting in interpolation
  procedures that can handle arbitrary quantifier-free
  formulas~\cite{DBLP:journals/tcs/McMillan05}.\todo{more details}
  This is the approach implemented in interpolating SMT
  solvers, which combine a SAT solver for efficient propositional
  reasoning with theory solvers that only see conjunctions of theory
  literals~\cite{BSST21}.
\item 
\emph{Strict inequalities} $0 < t$ in formulas can be handled by
replacing Farkas' lemma with Motzkin's transposition
theorem~\cite{DBLP:books/daglib/0090562}, which gives a complete
criterion for the unsatisfiability of a conjunction of strict and
non-strict inequalities. Based on the transposition theorem, it is
possible to formulate a lemma similar to \lemref{lem:LRA-interpolants}
for interpolation in the presence of both strict and non-strict
inequalities~\cite{DBLP:journals/jsc/RybalchenkoS10}.
An alternative path is to add separate proof rules for equations and
uninterpreted functions~\cite{DBLP:journals/tcs/McMillan05}, which can
then be used to reason about strict inequalities~$0 < t$ by rewriting
them to a conjunction~$0 \leq t \wedge 0 \not= t$.
\item \emph{Negated equations} $s \not= t$ can be replaced by
  disjunctions $0 < s - t \vee 0 < t - s$, which can then be handled
  by the rules for disjunctions and strict inequalities.
\end{itemize}

\subsubsection{Linear Integer Arithmetic (LIA)}

The computation of Craig interpolants becomes a bit more involved when
moving from real numbers to the integers. The good news is that the
proof rules presented in the previous section remain sound also for
integer arithmetic; as a result, also any LRA-separator derived using
the rules in \tabref{tab:lra-interpolating-rules} is a correct
LIA-separator~\cite{DBLP:journals/tcs/McMillan05}. Initial software
verification tools based on Craig interpolation therefore resorted to
pure LRA reasoning, even for programs operating on integers or bounded
machine integers.
The LRA proof rules are not complete for integer reasoning, of course,
since there are formulas that are satisfiable over the reals but
unsatisfiable over integers. To prove or interpolate such formulas,
SMT solvers augment the LRA rules with additional LIA-specific rules,
which are, however, harder to turn into interpolating rules.

For \emph{linear integer arithmetic,} LIA, we consider formulas that
can contain linear
inequalities~$0 \leq \alpha_1 x_1 + \cdots + \alpha_k x_k + \alpha_0$,
where $\alpha_0, \alpha_1, \ldots, \alpha_k \in \Z$ are now integers
and $x_1, \ldots, x_k$ are constant symbols of sort~$\mathit{Int}$, as
well as divisibility statements~$k \mid \cdot$ for every $k \in \N$
(see \exref{ex:lia-general-qf-interpolation}). The semantics of such
formulas is as expected.

\begin{example}
  \label{ex:exponential-LIA-interpolant}
  Consider the following families of formulas (where
  $n \in \N, n > 1$)~\cite{DBLP:conf/lpar/KroeningLR10}:
  \begin{equation*}
    A_n ~=~ 0 \leq y + 2nx + n - 1 \wedge 0 \leq -y-2nx,
    \qquad
    B_n ~=~  0 \leq y + 2nz - 1 \wedge 0 \leq   -y - 2nz + n ~.
  \end{equation*}
  Using more intuitive notation, the formulas can be written as
  $A_n \equiv -n < y+2nx \leq 0$ and $B_n \equiv 0 < y + 2nz \leq n$,
  and it is easy to see that $A_n \wedge B_n$ is satisfiable over the
  reals, but unsatisfiable over the integers (for any $n > 1$). Up to
  equivalence, the only LIA-separator for $A_n \wedge B_n$ is the
  sentence:
  \begin{equation}
    \label{eq:LIA-separator}
    I_n ~=~
    (2n \mid y)
    \vee
    (2n \mid y + 1)
    \vee
    (2n \mid y + 2)
    \vee \cdots \vee
    ( 2n \mid  y + n - 1 ) ~.
  \end{equation}
  Given a LIA signature providing only inequalities and divisibility
  statements, the size of separators is linear in $n$ and therefore
  exponential in the size of $A_n, B_n$.
  \lipicsEnd
\end{example}

There are three different kinds of LIA proof rules commonly
implemented in solvers, in addition to the LRA rules that we have
already seen in the previous section:
\begin{itemize}
\item \emph{Branch-and-bound (BnB)}~\cite{DBLP:books/daglib/0090562}: a
  solver first checks the satisfiability of a formula~$\phi$ over the
  reals instead of the integers, an approach known as
  \emph{relaxation.} If $\phi$ is found to be LRA-unsatisfiable, it is
  necessarily also LIA-unsatisfiable; if, on the other hand, a
  solution over the real numbers is found, the solver checks whether
  there are any symbols~$x$ that were assigned a non-integer
  value~$\alpha \in \R\setminus \Z$. When such an~$x$ exists,
  the search branches and continues separately for the two
  cases~$x \leq \lfloor \alpha \rfloor$ and
  $x \geq \lceil \alpha \rceil$.
  The proof rule~\ruleName{BnB} representing the branch-and-bound
  strategy is shown in \tabref{tab:lia-rules}.

  While branch-and-bound is an important proof rule used in solvers,
  and tends to perform well on many formulas, it does not give rise to
  a complete calculus for LIA. This can be seen when considering the
  formulas in \exref{ex:exponential-LIA-interpolant}, which have
  LRA-solutions of unbounded size, but no LIA-solutions.  Regardless
  of how often the \ruleName{BnB} rule is applied, there are still
  proof branches left in which the formulas in $\Gamma$ are
  LRA-satisfiable but LIA-unsatisfiable.
  
\item \emph{Strengthening}~\cite{omega,DBLP:conf/lpar/Rummer08}:
  similarly as with branch-and-bound, inequalities~$0 \leq t$ to be
  solved can be used to split the search into the cases~$t = 0$ and
  $1 \leq t$, expressing the fact that an integer term~$t$ cannot have
  a value strictly between~$0$ and $1$. Strengthening is the core rule
  used by the Omega test~\cite{omega}, in which the boundary
  case~$t = 0$ is also called a ``splinter''. By suitably restricting
  the number of splinters that are introduced, the Omega test uses
  strengthening for quantifier elimination in LIA.

  The corresponding rule~\ruleName{Strengthen} is shown in
  \tabref{tab:lia-rules}. Strengthening can be used to obtain a complete
  calculus for LIA, but it requires a separate set of rules to solve
  the equations that stem from the
  splinters~\cite{omega,DBLP:conf/lpar/Rummer08}; such rules for
  equations are not shown here. In
  \exref{ex:exponential-LIA-interpolant}, the \ruleName{Strengthen}
  rule can be applied $n$ times to the first inequality in $A_n$ to
  obtain an LRA-unsatisfiable set of inequalities, which can be
  proven using the rules in \tabref{tab:lra-rules}, as well as $n$
  splinters that can be handled using equational reasoning.

\begin{figure}[tb]
  \begin{gather*}
    \AxiomC{}
    \LeftLabel{\ruleName{Hyp}}
    \UnaryInfC{\naturalSeq{\Gamma}{0 \leq y + 2nx + n - 1}}
    \AxiomC{}
    \LeftLabel{\ruleName{Hyp}}
    \UnaryInfC{\naturalSeq{\Gamma}{0 \leq   -y - 2nz + n}}
    \LeftLabel{\ruleName{Comb}}
    \BinaryInfC{\naturalSeq{\Gamma}{0 \leq  (y + 2nx + n - 1) + (-y - 2nz + n)}}
    \LeftLabel{\ruleName{Simp}}
    \UnaryInfC{\naturalSeq{\Gamma}{0 \leq   2nx - 2nz + 2n - 1}}
    \LeftLabel{\ruleName{Div}}
    \UnaryInfC{\naturalSeq{\Gamma}{0 \leq   x - z + 0}}
    \LeftLabel{\ruleName{Simp}}
    \UnaryInfC{\naturalSeq{\Gamma}{0 \leq   x - z}}
    \UnaryInfC{$\mathcal{P}_1$}
    \DisplayProof
    \\[2ex]
    \hspace*{15ex}
    \AxiomC{}
    \LeftLabel{\ruleName{Hyp}}
    \UnaryInfC{\naturalSeq{\Gamma}{0 \leq -y-2nx}}
    \AxiomC{}
    \LeftLabel{\ruleName{Hyp}}
    \UnaryInfC{\naturalSeq{\Gamma}{0 \leq y + 2nz - 1}}
    \LeftLabel{\ruleName{Comb}}
    \BinaryInfC{\naturalSeq{\Gamma}{0 \leq  (-y-2nx) + (y + 2nz - 1)}}
    \LeftLabel{\ruleName{Simp}}
    \UnaryInfC{\naturalSeq{\Gamma}{0 \leq   2nz - 2nx - 1}}
    \LeftLabel{\ruleName{Div}}
    \UnaryInfC{\naturalSeq{\Gamma}{0 \leq   z - x - 1}}
    \UnaryInfC{$\mathcal{P}_2$}
    \DisplayProof
    \\[2ex]
    \hspace*{20ex}
    \AxiomC{$\mathcal{P}_1$}
    \AxiomC{$\mathcal{P}_2$}
    \LeftLabel{\ruleName{Comb}}
    \BinaryInfC{\naturalSeq{\Gamma}{0 \leq  (x-z) + (z-x - 1)}}
    \LeftLabel{\ruleName{Simp}}
    \UnaryInfC{\naturalSeq{\Gamma}{0 \leq  -1}}
    \LeftLabel{\ruleName{Con}}
    \UnaryInfC{\naturalSeq{\Gamma}{\bot}}
    \DisplayProof
  \end{gather*}
  
  \caption{Schematic cutting planes proof of unsatisfiability for the
    formulas
    $\Gamma = \{0 \leq y + 2nx + n - 1, 0 \leq -y-2nx, 0 \leq y + 2nz
    - 1, 0 \leq -y - 2nz + n\}$ in
    \exref{ex:exponential-LIA-interpolant}.}
  \label{fig:cp-proof}
\end{figure}

\item
  \emph{Cuts}~\cite{DBLP:books/daglib/0090562,DBLP:conf/cav/DutertreM06}
  allow solvers to round the constant term in inequalities in such a
  way that no integer solutions are lost. Cuts can be seen as a
  targeted application of the strengthening rule, restricting
  application to cases in which the introduced equation~$t = 0$ is
  immediately unsatisfiable. There are different strategies available
  for cuts, including Gomory
  cuts~\cite{DBLP:books/daglib/0090562,DBLP:conf/cav/DutertreM06} and
  the cuts-from-proofs method~\cite{DBLP:journals/fmsd/DilligDA11},
  both often used in solvers.

  A general rule~\ruleName{Div} implementing cuts is shown as the
  third rule in \tabref{tab:lia-rules}. The rule~\ruleName{Div} is
  applicable if the coefficients of the non-constant terms in an
  inequality have some factor~$\beta>0$ in common. In this case, all
  coefficients in the inequality can be divided by $\beta$ while
  rounding down the constant term~$\alpha_0 / \beta$, strengthening
  the inequality.  Cuts give rise to a complete calculus for LIA and
  have the advantage that no branching is needed; proofs including
  cuts are also called \emph{cutting planes proofs.} To obtain even
  better performance in practice, many solvers combine cuts with the
  branch-and-bound method.

  We show a cutting planes proof for the formulas in
  \exref{ex:exponential-LIA-interpolant} in \figref{fig:cp-proof}. The
  proof works for any positive value~$n$ and requires two applications
  of the \ruleName{Div} rule, after which a contradiction can be
  derived using \ruleName{Comb}.
\end{itemize}

All three kinds of proof rules can be made interpolating. We show the
interpolating versions of \emph{branch-and-bound} and \emph{cuts} in
\tabref{tab:lia-interpolating-rules}, using the same sequent notation as
for the interpolating LRA rules in
\tabref{tab:lra-interpolating-rules}. Interpolating versions of the
\emph{strengthening} rule are somewhat more involved, due to the need
to introduce also interpolating rules for linear
equations~\cite{DBLP:journals/fmsd/JainCG09}, but can be derived as
well~\cite{iprincess2011,DBLP:journals/corr/abs-1010-4422}.

The simplest rule to interpolate is branch-and-bound, since the
branching performed by \ruleName{BnB} can be seen as the splitting of
a disjunction~$0 \leq n - x \vee 0 \leq x - n - 1$ (which is a valid
formula in LIA). The two interpolating rules corresponding
to~\ruleName{BnB} are \ruleNameInt{BnB-A} and \ruleNameInt{BnB-B}:
depending on whether the constant~$x$ is a symbol from $A$ or from
$B$, one or the other rule has to be used. The
rules~\ruleNameInt{BnB-A} and \ruleNameInt{BnB-B} correspond to the
rules~\ruleNameInt{Disj-A} and \ruleNameInt{Disj-B} for splitting
disjunctions; like the rules for disjunctions, they derive the overall
interpolant for $(A, B)$ by combining the sub-interpolants~$I, J$
using either a disjunction or a conjunction. If $x$ occurs in both $A$
and $B$, either rule can be used, but the rules will, in general, lead
to different interpolants.

Interpolating cutting planes proofs is more involved, since the
inequalities tightened in a proof using the rule~\ruleName{Div} might
contain some symbols that only occur in $A$, and some symbols that
only occur in $B$, and therefore cannot be associated uniquely with
either side. It is possible, however, to extend the annotation
approach for LRA interpolation to LIA by allowing the annotation~$t'$
in a sequent~$\naturalIntSeq{(A, B)}{0 \leq t}{t'}$ to contain the
operator~$\div$ for integer division with remainder, provided that the
denominator is an integer
constant~\cite{DBLP:journals/corr/abs-1010-4422,DBLP:journals/jsyml/Pudlak97}. We
use the Euclidean definition of division with
remainder~\cite{DBLP:books/daglib/0070572}, in which $a \div b$ for
$a, b \in \Z$, $b \not= 0$ is defined to be the unique
number~$q \in \Z$ such that $a = b \cdot q + r$ with $0 \leq r < |b|$.
To handle integer division, it is also necessary to generalize the
\ruleNameInt{Simp} rule by adding a case for pulling terms out of the
scope of $\div$, namely
$(\alpha s + t) \div \beta \leadsto (\alpha / \beta)s + (t \div
\beta)$ whenever $\beta \mid \alpha$. This rewrite is correct, since
the equation~$(\beta x + y) \div \beta = x + (y \div \beta)$ holds for
Euclidian division, provided that $\beta \not= 0$.

\begin{figure}[p]
  \centering
  \rotatebox{90}{
    \begin{minipage}{1.15\linewidth}
  \begin{gather*}
    \AxiomC{}
    \LeftLabel{\ruleNameInt{Hyp-A}}
    \UnaryInfC{\naturalIntSeq{\Gamma^I}{0 \leq y + 2nx + n - 1}{y + 2nx + n - 1}}
    \AxiomC{}
    \LeftLabel{\ruleNameInt{Hyp-B}}
    \UnaryInfC{\naturalIntSeq{\Gamma^I}{0 \leq   -y - 2nz + n}{0}}
    \LeftLabel{\ruleNameInt{Comb}}
    \BinaryInfC{\naturalIntSeq{\Gamma^I}{0 \leq  (y + 2nx + n - 1) + (-y - 2nz + n)}{(y + 2nx + n - 1) + 0}}
    \LeftLabel{\ruleNameInt{Simp}}
    \UnaryInfC{\naturalIntSeq{\Gamma^I}{0 \leq   2nx - 2nz + 2n - 1}{y + 2nx + n - 1}}
    \LeftLabel{\ruleNameInt{Div}}
    \UnaryInfC{\naturalIntSeq{\Gamma^I}{0 \leq   x - z + 0}{(y + 2nx + n - 1) \div 2n}}
    \LeftLabel{\ruleNameInt{Simp}}
    \UnaryInfC{\naturalIntSeq{\Gamma^I}{0 \leq   x - z}{x + (y + n - 1) \div 2n}}
    \UnaryInfC{$\mathcal{P}_1$}
    \DisplayProof
    \\[2ex]
    \hspace*{10ex}
    \AxiomC{}
    \LeftLabel{\ruleNameInt{Hyp-A}}
    \UnaryInfC{\naturalIntSeq{\Gamma^I}{0 \leq -y-2nx}{-y-2nx}}
    \AxiomC{}
    \LeftLabel{\ruleNameInt{Hyp-B}}
    \UnaryInfC{\naturalIntSeq{\Gamma^I}{0 \leq y + 2nz - 1}{0}}
    \LeftLabel{\ruleNameInt{Comb}}
    \BinaryInfC{\naturalIntSeq{\Gamma^I}{0 \leq  (-y-2nx) + (y + 2nz - 1)}{(-y-2nx) + 0}}
    \LeftLabel{\ruleNameInt{Simp}}
    \UnaryInfC{\naturalIntSeq{\Gamma^I}{0 \leq   2nz - 2nx - 1}{-y-2nx}}
    \LeftLabel{\ruleNameInt{Div}}
    \UnaryInfC{\naturalIntSeq{\Gamma^I}{0 \leq   z - x - 1}{(-y-2nx) \div 2n}}
    \LeftLabel{\ruleNameInt{Simp}}
    \UnaryInfC{\naturalIntSeq{\Gamma^I}{0 \leq   z - x - 1}{-x + (-y) \div 2n}}
    \UnaryInfC{$\mathcal{P}_2$}
    \DisplayProof
    \\[2ex]
    \hspace*{7ex}
    \AxiomC{$\mathcal{P}_1$}
    \AxiomC{$\mathcal{P}_2$}
    \LeftLabel{\ruleNameInt{Comb}}
    \BinaryInfC{\naturalIntSeq{\Gamma^I}{0 \leq  (x-z) + (z-x - 1)}{(x + (y + n - 1) \div 2n) + (-x + (-y) \div 2n)}}
    \LeftLabel{\ruleNameInt{Simp}}
    \UnaryInfC{\naturalIntSeq{\Gamma^I}{0 \leq  -1}{(y + n - 1) \div 2n + (-y) \div 2n}}
    \LeftLabel{\ruleNameInt{Con}}
    \UnaryInfC{\intSeq{\Gamma^I}{0 \leq (y + n - 1) \div 2n + (-y) \div 2n}}
    \DisplayProof
  \end{gather*}
    \end{minipage}
  }
  
  
  \caption{Interpolating proof for the formulas in
    \exref{ex:exponential-LIA-interpolant}:
    $\Gamma^I = (A, B) = (\{0 \leq y + 2nx + n - 1,\linebreak 0 \leq -y-2nx\},
    \{0 \leq y + 2nz - 1, 0 \leq -y - 2nz + n\})$.}
  \label{fig:cp-interpolating-proof}
\end{figure}

The rule~\ruleNameInt{Div} in \tabref{tab:lia-interpolating-rules} is
deceivingly simple and uses $\div$ to divide the annotation~$t'$ in a
sequent by the same number~$\beta$ as the main inequality.  The effect
of \ruleNameInt{Div} is best illustrated using an
example. \figref{fig:cp-interpolating-proof} shows the interpolating
version of the proof in \figref{fig:cp-proof}, proving that the
formulas in \exref{ex:exponential-LIA-interpolant} are unsatisfiable
for any positive $n$. The two applications of \ruleNameInt{Div}
introduce a division~$\mbox{} \div 2n$ in the annotation term. The
annotations are subsequently simplified by applying the
rule~\ruleNameInt{Simp}, which reduces the terms~$2nx$ and $-2nx$ to
$x$ and $-x$, respectively, removing the symbol~$x$ from the scope of
$\div$. Since $x$ and $-x$ cancel each other out, the final
LIA-separator is the
formula~$0 \leq (y + n - 1) \div 2n + (-y) \div 2n$ that only contains
the symbol~$y$ that is common to $A$ and $B$. The reader can verify
that this separator is equivalent to $I_n$ in
\exref{ex:exponential-LIA-interpolant} but exponentially more
succinct thanks to the use of $\div$.

The interesting question, of course, is why the rule~\ruleNameInt{Div}
always produces correct interpolants. Adapted to the case of LIA, the
conditions to be satisfied by an interpolating
sequent~$\naturalIntSeq{(A, B)}{0 \leq t}{t'}$ are:
\begin{itemize}
\item $A \models_{\text{LIA}} 0 \leq t'$,
\item $B \models_{\text{LIA}} 0 \leq t - t'$, and
\item $\sig(t') \subseteq \sig(A) \cup \Sigma_{\text{LIA}}$ and
  $t - t' \leadsto s$ with
  $\sig(s) \subseteq \sig(B) \cup \Sigma_{\text{LIA}}$.
\end{itemize}
We can check, using purely arithmetic reasoning, that the first two
conditions hold in any sequent~$\naturalIntSeq{(A, B)}{0 \leq t}{t'}$
derived using the rules in \tabref{tab:lra-interpolating-rules} and
rule~\ruleNameInt{Div}. To see that the last conditions are preserved
by \ruleNameInt{Div}, note that the ability to rewrite~$t - t'$ to a
term~$s$ with $\sig(s) \subseteq \sig(B) \cup \Sigma_{\text{LIA}}$
implies that whenever a term~$\alpha x$ occurs in $t$, for some $x$
that is not present in $B$, there has to be a corresponding
term~$\alpha x$ in $t'$ canceling out $\alpha x$. When
applying~\ruleNameInt{Div}, it then has to be the case that $\alpha$
is a multiple of the denominator~$\beta$. An application of the
rule~\ruleNameInt{Simp} can therefore simplify the
annotation~$(\cdots +\alpha x + \cdots) \div \beta$ to
$(\alpha / \beta)x + (\cdots) \div \beta$.

More generally, after applying \ruleNameInt{Simp}, it will never be
the case that symbols that do not occur in $B$ are in the scope of the
$\div$ operator in the annotation~$t'$. Despite the use of $\div$ in
the annotations, the interpolating proof rules are able to eliminate
symbols that are present in $A$ but not in $B$ from annotations in
more or less the same way as for LRA, and the derived interpolants are
correct.


\subsection{Further Interpolation Approaches}

\subsubsection{Proof-Based Interpolation}
\label{sec:proofBased}

The extraction of (quantifier-free) $T$-separators from proofs is
possible not only for LRA and LIA, but also for many other
theories. 
The interpolating proof rules for LRA can be extended to handle also
the theory of Equality and Uninterpreted Functions
(EUF)~\cite{DBLP:journals/tcs/McMillan05}.
%
The LRA calculus can also be extended to non-linear arithmetic
(Non-linear Real Arithmetic, NRA), replacing the witnesses produced by
Farkas' lemma by Sum-Of-Squares (SOS)
witnesses~\cite{DBLP:conf/cav/DaiXZ13,DBLP:conf/fm/WuWXLZG24}.  NRA
interpolants can also be computed with the help of the
model-constructing satisfiability calculus
MCSAT~\cite{10.1007/978-3-030-81688-9_13}.

Several interpolating calculi have been presented for the theory of
arrays~\cite{DBLP:conf/rta/BruttomessoGR11,DBLP:conf/popl/TotlaW13,DBLP:conf/cade/HoenickeS18}. As
illustrated in \exref{ex:arrays}, one difficulty to be solved with
arrays is the fact that the standard version of arrays does not admit
(plain or general) quantifier-free interpolation. This problem can be
addressed by extending the theory by a function symbol~$\mathit{diff}$
satisfying the following axiom:
\begin{equation*}
  \forall M, M'.\; \big( M \not= M' \to
  \mathit{select}(M, \mathit{diff}(M, M')) \not=
  \mathit{select}(M', \mathit{diff}(M, M')) \big)~.
\end{equation*}
The axiom implies extensionality of arrays, i.e., two arrays are equal
if and only if they have the same contents.  In \exref{ex:arrays}, we
can then construct the quantifier-free reverse interpolant
\begin{equation*}
  M' = \mathit{store}\big(M, \mathit{diff}(M, M'),
  \mathit{select}(M', \mathit{diff}(M, M'))\big)~.
\end{equation*}

\subsubsection{Graph-Based Interpolation}
\label{sec:euf}

\contents{
\item Equality and Uninterpreted Functions (EUF)
\item Congruence closure and E-graphs
\item Recursive extraction of interpolants
\item From one graph, many interpolants can be extracted
\item (connect to chapter about automated reasoning)
\item Other examples: difference-bound constraints
}

In the context of SMT, the special case of the satisfiability of a
formula in the empty theory~$T_\emptyset$ is commonly called
\emph{satisfiability modulo Equality and Uninterpreted Functions
  (EUF).} EUF is in SMT solvers usually handled with the help of the
congruence closure algorithm~\cite{DBLP:journals/jar/BachmairTV03},
which decides whether a given equation~$s = t$ follows from a finite
set~$E$ of ground equations, i.e., $E \models_{T_\emptyset} s =
t$. For this, congruence closure constructs an equality graph (the
\emph{e-graph}) in which the terms and sub-terms of $E$ are the nodes,
the equations~$E$ and inferred equalities are edges, and a second set
of edges is used to represent the sub-term relation. Congruence
closure can show the unsatisfiability of a formula~$\phi$ by inferring
that some equation~$s = t$ follows from the equations present in
$\phi$, while $\phi$ also contains the negated equation~$s \not= t$. The
e-graph represents the proof explaining the unsatisfiability of a formula.

In case of an unsatisfiable conjunction~$A \wedge B$ of equations,
separators can be extracted from the
e-graph~\cite{DBLP:journals/corr/abs-1111-5652}. This is done by
summarizing paths in the graph; if a path is created only by equations
from the formula~$A$, it can be represented by a single equation that
is entailed by $A$, and similarly for equations from $B$. Reasoning in
terms of congruences ($s = t$ implies $f(s) = f(t)$) gives rise to
implications in the computed separators.


\subsubsection{Interpolation by Reduction}

\contents{
\item Algebraic Data-types
\item Translation of an interpolation problem to some domain in which
  interpolation procedures are available
\item Other examples: bit-vectors (bit-blasting, int-blasting),
  arrays, linked lists
}

A further  paradigm for computing Craig interpolants is
\emph{reduction}~\cite{DBLP:conf/sigsoft/KapurMZ06}. Reduction is
applicable whenever direct interpolation of a conjunction~$A \wedge B$
in some theory~$T$ is difficult, but it is possible to translate
$A \wedge B$ to a conjunction~$A' \wedge B'$ in a 
theory~$T'$ in which interpolants can be computed. The
$T'$-separator~$I'$ then has to be translated back to obtain a
$T$-separator~$I$ for the original conjunction~$A \wedge B$; in the
special case that $T$ is an extension of $T'$, back-translation is
unnecessary, since $I'$ is already a correct $T$-separator for
$A \wedge B$.

The reduction approach can be applied to a wide range of theories. The
theory of arrays and the theory of sets with finite cardinality
constraints can be reduced to EUF, while the theory of multisets can
be reduced to the combined theory of EUF and
LIA~\cite{DBLP:conf/sigsoft/KapurMZ06}. Since reduction can introduce
quantifiers, however, it is in general necessary to post-process
interpolants using a quantifier elimination procedure for
quantifier-free interpolation. In follow-up work, it was shown that
the theory of algebraic data types with size constraints can be
reduced to the combined theory of EUF and
LIA~\cite{DBLP:conf/synasc/HojjatR17}, and that the theory of
bit-vectors can be reduced to
LIA~\cite{DBLP:journals/fmsd/BackemanRZ21}. The latter reduction
happens lazily with the help of tailor-made proof rules that
successively translate bit-vector formulas to LIA formulas.

\section{Verification Algorithms Based on Interpolation}
\label{sec:verification}

We now introduce some of the main verification algorithms based on Craig
interpolation, starting with the SAT-based model checking algorithm
proposed by McMillan in 2003~\cite{DBLP:conf/cav/McMillan03}. We
present the algorithms in our setting of quantifier-free interpolation
in a theory~$T$, although the earlier papers focus on
propositional interpolation.

\subsection{BMC: Bounded Model Checking}

Interpolation-based verification algorithms were inspired by
\emph{Bounded Model Checking~(BMC),} in which the behavior of a system
during the first $k$ execution steps is analyzed with the help of SAT
or SMT solvers~\cite{DBLP:conf/tacas/BiereCCZ99}. The insight that led
to interpolation-based algorithms is that interpolants can be
extracted for the formulas considered in BMC, this way turning
\emph{bounded} into \emph{unbounded} model checking.

To introduce BMC, we fix a theory~$T$ and assume that states of the
system to be verified are described by a
vector~$\bar x = (x_1, \ldots, x_n)$ of constants; in the original
SAT-based formulation, $\bar x$ is a vector of propositional
variables. We use the notation~$\bar x' = (x'_1, \ldots, x'_n)$ for a
fresh primed copy of the constants, and similarly~$\bar x_i$ for
copies indexed by $i$.
BMC operates on \emph{transition systems} represented using formulas
that model, for instance, sequential hardware circuits.  A transition
system~$(B, \tr)$ is given by two sentences~$B[\bar x]$ and
$\tr[\bar x, \bar x']$ defining the \emph{initial states} and the
\emph{transition relation,} respectively. The notation~$B[\bar x]$
expresses that the constants~$\bar x$ can occur in $B$, and
$B[\bar y]$ that the constants~$\bar x$ are replaced by another
vector~$\bar y$ of terms. Intuitively, any valuation of $\bar x$
satisfying $B[\bar x]$ is considered an initial state of the system,
and valuations of $\bar x, \bar x' $ that satisfy
$\tr[\bar x, \bar x']$ represent a transition from $\bar x$ to
$\bar x'$.

A further sentence~$E[\bar x]$ is introduced to represent \emph{error
  states.}  The goal of verification is to decide whether any error
states are reachable from the initial states.  The verification of
partial correctness properties, such as the ones discussed in
\secref{sec:motivating}, can be reduced to this form, modeling the
states in which postconditions or assertions are violated using the
formula~$E$.
In BMC, the reachability of error states is
characterized by unfolding the transition relation a bounded number of
times: error states are reachable if and only if there is a natural
number~$k \in \N$ so that the sentence
\begin{equation}
  \label{eq:errorTrace}
  B[\bar x_0] ~\wedge~ \tr[\bar x_0, \bar x_1] \wedge \tr[\bar x_1, \bar
  x_2]
  \wedge \cdots \wedge
  \tr[\bar x_{k-1}, \bar x_k] ~\wedge~
  \big(E[\bar x_0] \vee E[\bar x_1]
  \vee \cdots \vee E[\bar x_k]\big)
\end{equation}
is $T$-satisfiable. As it turns out, applying SAT or SMT solvers to
\eqref{eq:errorTrace} is a very effective way to discover systems
in which errors can occur, thanks to the performance of such solvers.
If the formula~\eqref{eq:errorTrace} is $T$-unsatisfiable, in
contrast, it is not (at least not in the general case) possible to tell whether
the system can indeed not encounter any errors, or whether the chosen
$k$ was simply too small to find an error.

\subsection{IMC: Interpolation and SAT-Based Model Checking}
\label{sec:imc}

\contents{
\item Focus on propositional interpolation
\item Inference of inductive invariants for transition systems
\item Soundness and completeness
\item Variants/refinements of the algorithm
}

\subsubsection{The IMC Algorithm}

The first interpolation-based model checking algorithm extends BMC
with the computation of inductive invariants through Craig
interpolation, making it possible to tell conclusively that
\eqref{eq:errorTrace} is $T$-unsatisfiable \emph{for every} value of
$k \in \N$.  We refer to the algorithm as the \emph{IMC} algorithm,
following the name given in \cite{DBLP:journals/jar/BeyerLW25}.
To prove that the formula~\eqref{eq:errorTrace} is unsatisfiable for
every $k \in \N$, it is enough to find a sentence~$R[\bar x]$, the
\emph{inductive invariant,} with the properties discussed in
\secref{sec:motivating}:
\begin{itemize}
\item \makebox[6em][l]{\emph{Initiation:}} $B[\bar x] \models_T R[\bar x]$;
\item \makebox[6em][l]{\emph{Consecution:}} $R[\bar x] \wedge \tr[\bar x, \bar x'] \models_T
  R[\bar x']$;
\item \makebox[6em][l]{\emph{Sufficiency:}} $R[\bar x] \wedge E[\bar x] \models_T \bot$.
\end{itemize}

Indeed, assuming that formula~\eqref{eq:errorTrace} holds, due to
\emph{Initiation} and \emph{Consecution} it follows that $R$ is true
for all states~$\bar x_0, \ldots, \bar x_i$, and due to
\emph{Sufficiency} a contradiction with the disjunction of
$E$-formulas follows.

For reasons of presentation, we make the simplifying assumption that
the relation~$\tr$ is \emph{total}, i.e.,
$\forall \bar y.\, \exists \bar z.\; \tr[\bar y, \bar z]$ is
valid.
To obtain candidates for the inductive invariant~$R[\bar x]$, the IMC
algorithm constructs a sequence~$R_0, R_1, R_2, \ldots$ of sentences,
in such a way that each $R_i$ is an over-approximation of the states
that are reachable in at most $i$ steps from the initial
states. Naturally, we can set $R_0 = B$. To construct the other
sentences, we fix constants~$j, k \in \N$ with $j \leq k$, which are
the parameters of IMC. Each formula~$R_{i+1}$ is then derived from
$R_i$ by considering the following conjunction:
\begin{equation}
  \label{eq:imcQuery}
  \underbrace{\vphantom{\bigvee_{l=j}^{k}}
    R_i[\bar x_0] \wedge \tr[\bar x_0, \bar x_1]}_{A[\bar x_0, \bar x_1]}
  ~\wedge~
  \underbrace{
  \bigwedge_{l=1}^{k-1} \tr[\bar x_l, \bar x_{l+1}]
  \wedge
  \bigvee_{l=j}^{k}  E[\bar x_l]}_{B[\bar x_1, \ldots, \bar x_k]}~.
\end{equation}
Formula~\eqref{eq:imcQuery} corresponds to the BMC
query~\eqref{eq:errorTrace}, with the difference that the initial
states~$B$ have been replaced with the $i$th approximation~$R_i$, and that
the range of the disjunction of the $E[\bar x_l]$ is parameterized in $j$.
Three cases are possible:
\begin{itemize}
\item \textbf{Case (i):} Formula~\eqref{eq:imcQuery} is satisfiable
  and $i = 0$. This implies that $R_i = R_0 = B$ and error states are
  reachable in the transition system. Formula~\eqref{eq:errorTrace}
  is then satisfiable as well for the same~$k$.
\item \textbf{Case (ii):} Formula~\eqref{eq:imcQuery} is satisfiable
  and $i > 0$. This can happen either because error states are
  reachable (like in (i)), or because the approximation~$R_i$ is too
  imprecise. Nothing can be said in this situation and the algorithm
  has to abort.
\item \textbf{Case (iii):} Formula~\eqref{eq:imcQuery} is
  unsatisfiable. We can then extract a $T$-separator~$I_i[\bar x_1]$
  from the
  conjunction~$A[\bar x_0, \bar x_1] \wedge B[\bar x_1, \ldots, \bar
  x_k]$, since $\bar x_1$ are the constants that occur in both
  $A[\bar x_0, \bar x_1]$ and $B[\bar x_1, \ldots, \bar x_k]$. By
  renaming $\bar x_1$ to $\bar x$, we obtain a sentence~$I_i[\bar
  x]$ that is an over-approximation of the states that can be reached
  in one step from $R_i$. The next approximation is set to
  $R_{i+1}[\bar x] = R_i[\bar x] \vee I_i[\bar x]$, weakening $R_i$ to
  take one further step of the transition system into
  account.
\end{itemize}

It turns out that the sequence~$R_0, R_1, R_2, \ldots$ often contains
some $R_i$, for small values of $i$, that are inductive
invariants. Intuitively, this is because the sentences~$R_i$ are
constructed to approximate the reachable states of the transition
system. The use of Craig interpolation has the effect of
\emph{accelerating} the fixed-point construction, focusing on
properties of the states that are sufficient to stay inconsistent with
the $B$-part of \eqref{eq:imcQuery} (i.e., not reaching any error
states within $k$ steps) and ignoring other details that
might be part of the formulas~$B, \tr, E$ but irrelevant for ruling
out errors. This effect of acceleration is achieved by obtaining interpolants
from proofs, utilizing the fact that a proof is likely to only refer
to those facts and constraints that are necessary to show the
unsatisfiability of \eqref{eq:imcQuery}, which are exactly the
properties needed to avoid error states.

By construction, the sentences~$R_0, R_1, R_2, \ldots$ all satisfy the
\emph{Initiation} condition of inductive invariants. To check whether
any of the sentences~$R_i$ satisfies \emph{Consecution}, this
condition can be verified explicitly
($R_i[\bar x] \wedge \tr[\bar x, \bar x'] \models_T R_i[\bar x']$);
alternatively, as a sufficient condition, it can also be checked
whether it is ever the case that $R_{i+1} \equiv_T R_i$, which implies
\emph{Consecution} thanks to the definition of
$T$-separators. \emph{Sufficiency} of a sentence~$R_i$ can similarly
be verified explicitly
($R_i[\bar x] \wedge E[\bar x] \models_T \bot$).
IMC can also be set up in such a way that all constructed
formulas~$R_i$ satisfy \emph{Sufficiency}: for this, it has to be
verified that none of the initial states~$R_0 = B$ are error states,
and $j$ must have value~$1$~\cite{DBLP:conf/cav/McMillan03}.
(At this point, the assumption that $T$ is total is required as well.)

To make the approach effective, it is common to require the
formulas~$B, \tr, E$ to be quantifier-free and the considered
theory~$T$ to admit plain quantifier-free interpolation; this ensures
that also the sentences~$R_0, R_1, R_2, \ldots$ are quantifier-free
and the required checks remain decidable.  The original IMC
algorithm~\cite{DBLP:conf/cav/McMillan03} made use of Craig
interpolants extracted from propositional resolution proofs (see
\refchapter{chapter:propositional}).

\bigskip
The parameters~$j, k$ can be used to control the behavior of the
algorithm. Bigger values of $k$ and smaller values of $j$ lead to more
precise over-approximations, since the $B$-part of \eqref{eq:imcQuery}
ensures that no error states are reachable from the
separator~$I_i[\bar x_1]$ in at least $j-1$ and at most $k-1$
steps. Increasing $k$, in particular, will lead to stronger
formulas~$I_i[\bar x_1]$ being computed, which will reduce the
probability of ending up in Case~(ii) in later iterations. It can
be shown that IMC is even \emph{complete} for computing
inductive invariants under certain
assumptions~\cite{DBLP:conf/cav/McMillan03}:
\begin{itemize}
\item The transition system has \emph{bounded reverse depth,} which
  means that there is a number~$d \in \N$ such that whenever $E$ is
  reachable from some state, then there is also a path from the state
  to $E$ that has length at most $d$.
\item The sequence of computed invariant candidates~$R_0, R_1, \ldots$
  is guaranteed to become stationary, i.e., there is some $i \in \N$
  such that $R_i \equiv_T R_{i'}$ for every $i' \geq i$.
\end{itemize}
If the first assumption is satisfied, it is enough to choose a large
enough $k$ and set $j$ to $1$ to prevent the algorithm from failing
due to Case~(ii). If, moreover, the second assumption is satisfied,
there will be some $R_i$ such that $R_i \equiv_T R_{i+1}$, which is
then guaranteed to pass the \emph{Consecution} condition and give rise
to an inductive invariant.

Both assumptions are satisfied by transition systems and invariants in
propositional logic. For the second condition, observe that each $R_i$
entails the next $R_{i+1}$, so that the candidates~$R_0, R_1, \ldots$
form an ascending chain. Since there are only finitely many
propositional formulas~$R_i$ (up to equivalence), every
chain~$R_0, R_1, \ldots$ eventually becomes stationary. In contrast,
for systems with infinite state space, which are modeled using
theories~$T$ that allow infinite universes, the conditions are
typically violated. Empirically, the IMC algorithm tends to perform
well even for such systems~\cite{DBLP:journals/jar/BeyerLW25}.

\subsubsection{Related Interpolation-Based Algorithms}

Many different versions and improvements of the SAT-based model
checking algorithm IMC have been proposed. In the original
formulation, IMC was defined for transition systems and interpolation
in propositional logic~\cite{DBLP:conf/cav/McMillan03}. IMC was later
extended to theories and the setting of software
verification~\cite{DBLP:journals/jar/BeyerLW25}, a setting that we
adopted for this chapter as well.

The interpolation sequence-based model checking algorithm \emph{ISB} uses a
similar fixed-point loop as IMC, but applies sequence interpolants
(\defref{def:sequenceInterpolants}) instead of binary
interpolants~\cite{DBLP:conf/fmcad/VizelG09}. The key difference
between IMC and ISB is in the way how interpolation queries are
defined: while IMC includes the previous approximation~$R_i$ in the
query~\eqref{eq:imcQuery} used to compute the next $R_{i+1}$, the
queries applied by ISB only refer to the sentences~$B, \tr, E$ defining
the transition system and the error states, but ISB increases the
considered number of transitions (copies of $\tr$) in each
iteration.

Verification algorithms for software programs often take the control
structure of a program into account. Instead of a monolithic
transition system~$(B, \tr)$, we would then start from a program
described by a set of transition relations arranged by a control-flow
automaton. This setting leads to smaller interpolation queries, since
only one program path at a time has to be considered.  The \emph{Impact}
algorithm follows this strategy, using sequence interpolants to obtain
invariant candidates for particular program
paths~\cite{DBLP:conf/cav/McMillan06}.

The \emph{IC3} algorithm is even more local and analyzes individual
system states instead of the whole transition relation to obtain
invariant candidates~\cite{DBLP:conf/vmcai/Bradley11}. While IC3 has
similarities with the IMC and ISB algorithms, IC3 does not rely on
Craig interpolation anymore but directly infers clauses that are
inductive invariants. The \emph{Spacer} algorithm, which is an
extension of IC3, uses interpolation as an optional abstraction
step~\cite{DBLP:journals/fmsd/KomuravelliGC16}.


\subsection{CEGAR: Counterexample-Guided Abstraction Refinement}
\label{sec:cegar}

\contents{
\item Focus on interpolation in theories
\item Inference of inductive state invariants
\item Boolean and Cartesian abstractions
\item Soundness, completeness, incompleteness
}

Predicate
abstraction~\cite{DBLP:conf/cav/GrafS97,DBLP:conf/tacas/BallPR01} and
\emph{Counterexample-Guided Abstraction Refinement (CEGAR)}~\cite{cegar}
provide another perspective on the use of Craig interpolation for
verifying systems. Both notions predate the IMC algorithm and the use
of interpolation in verification but turned out to be very
suitable for a combination with
interpolation~\cite{DBLP:conf/popl/HenzingerJMM04,DBLP:conf/tacas/McMillan05}.

\subsubsection{Existential Abstractions}

We again fix a theory~$T$. For the sake of presentation, we assume that
the theory comes with a uniquely defined structure~$S_T$, i.e., that
the set~$\mathcal{S}_T = \{S_T\}$ in \defref{def:theory} is a singleton
set. As the states of a system are described by a
vector~$\bar x = (x_1, \ldots, x_n)$ of constants with
sorts~$\bar \sigma = (\sigma_1, \ldots, \sigma_n)$, the state space of
the system is the Cartesian
product~$D = D_{\sigma_1} \times \cdots \times D_{\sigma_n}$. Since
the theory has a unique structure~$S_T$, with slight abuse of notation
we can write $B[s]$ to express that a state~$s \in D$ satisfies the
sentence~$B$, i.e., that $s$ is an initial state, and similarly state
by $\tr[s, s']$ that there is a transition from $s$ to $s'$.

To verify properties of the transition system~$(B, \tr)$, for instance
the reachability of error states~$E$, it is frequently useful to
consider an \emph{abstract} version of the transition system with a
smaller number of states. We can define abstractions by choosing a
non-empty set~$\hat D$ as the abstract state space, together with an
\emph{abstraction function}~$h : D \to \hat D$ that induces the
abstract transition system in terms of a set~$\hat B$ of abstract
initial states and an abstract transition relation~$\hat \tr$:
\begin{align*}
  \hat B &= \{ \hat s \in \hat D \mid
  \text{there is~} s \in D \text{~with~} h(s) = \hat s \text{~and~}
           B[s] \}~,
  \\
  \hat \tr &= \{ (\hat s, \hat s') \in \hat D^2 \mid
  \text{there are~} s, s' \in D \text{~with~} h(s) = \hat s, h(s') =
  \hat s' \text{~and~} \tr[s, s'] \}~.
\end{align*}
This construction is known as \emph{existential
  abstraction}~\cite{DBLP:conf/fmco/Grumberg05}. It is useful since
non-reachability of states in existential abstractions implies
non-reachability also in the concrete system: whenever $M \subseteq D$
and the abstract states~$h(M)$ are not reachable from initial states
in $(\hat B, \hat\tr)$, the concrete states $M$ are not reachable
from the initial states in $(B, \tr)$ either. More generally,
existential abstractions can be used to prove ACTL* properties of the
concrete system~\cite{DBLP:conf/fmco/Grumberg05}.

Existential abstractions are illustrated in
\figref{fig:existentialAbs1}. Black states and edges represent the
concrete transition system~$(B, \tr)$, with initial states marked with
a further incoming edge. The concrete states are grouped together to
\ifcolor red\else gray\fi{} abstract states by the
function~$h$. Abstract states are connected by transitions whenever
any of the included concrete states are, leading to over-approximation
of the behavior. For instance, the concrete transition system in
\figref{fig:existentialAbs1} does not contain any transitions of a
state to itself, while the abstract transition system has such
transitions.

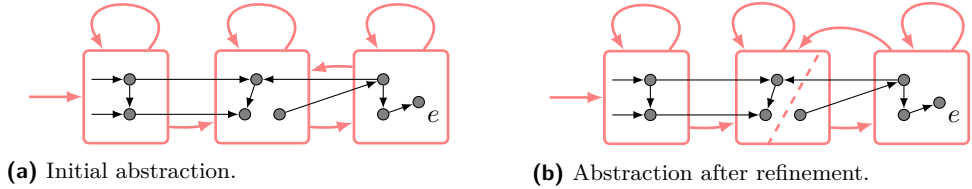
\begin{figure}[tb]
  \tikzset{dot/.style={circle,draw=black,fill=\ifcolor black!50\else black\fi,inner sep=1.5pt},
    concEdge/.style={-latex},
    absBase/.style={draw=\ifcolor red!50\else black!30\fi,line width=1pt},
    absEdge/.style={-latex,absBase},
    absNode/.style={rectangle, rounded corners=2pt,absBase, inner sep=2ex}}

  \hspace*{\fill}
  \begin{minipage}{0.45\linewidth}
    \begin{subfigure}{\linewidth}
      \centering
  \begin{tikzpicture}[node distance=2ex]
    \node[dot] (a) {};
    \node[dot,below=of a] (b) {};
    \node[dot,right=1.5 of a] (c) {};
    \node[dot,below=of c,xshift=-1ex] (d) {};
    \node[dot,below= of c,xshift=2ex] (e) {};
    \node[dot,right=1.5 of c] (f) {};
    \node[dot,below=of f] (g) {};
    \node[dot,right=of f,yshift=-2ex] (h) {};

    \node[absNode,fit={(a) (b) ($(a)+(-0.3,0)$) ($(a)+(0.2,0)$)}] (A) {};
    \node[absNode,fit={(c) (d) (e)}] (B) {};
    \node[absNode,fit={(f) (g) (h)}] (C) {};

    \node[below right=-0.1 of h] {$e$};
    
    \draw[concEdge] ($(a)+(-0.5,0)$) -- (a);
    \draw[concEdge] ($(b)+(-0.5,0)$) -- (b);

    \draw[concEdge] (a) -- (b);
    \draw[concEdge] (a) -- (c);
    \draw[concEdge] (b) -- (d);
    \draw[concEdge] (c) -- (d);
    \draw[concEdge] (e) -- (f);
    \draw[concEdge] (f) -- (c);
    \draw[concEdge] (f) -- (g);
    \draw[concEdge] (g) -- (h);
    
    \draw[absEdge] ($(A.west)+(-0.7,0)$) -- (A);
    \draw[absEdge] ($(A.east)+(0,-0.4)$) to [bend right=10] (B);
    \draw[absEdge] ($(B.east)+(0,-0.4)$) to [bend right=10] (C);
    \draw[absEdge] ($(C.west)+(0,0.4)$) to [bend right=10] (B);
    \draw[absEdge, loop above,out=50,in=120,looseness=4]
         ($(A.north)+(0.3,0)$) to ($(A.north)+(-0.3,0)$);
    \draw[absEdge, loop above,out=50,in=120,looseness=4]
         ($(B.north)+(0.3,0)$) to ($(B.north)+(-0.3,0)$);
    \draw[absEdge, loop above,out=50,in=120,looseness=4]
         ($(C.north)+(0.3,0)$) to ($(C.north)+(-0.3,0)$);
  \end{tikzpicture}
  \caption{Initial abstraction.}
  \label{fig:existentialAbs1}
  \end{subfigure}
  \end{minipage}
  \hspace*{\fill}
  \begin{minipage}{0.45\linewidth}
  \begin{subfigure}{\linewidth}
      \centering
  \begin{tikzpicture}[node distance=2ex]
    \node[dot] (a) {};
    \node[dot,below=of a] (b) {};
    \node[dot,right=1.5 of a] (c) {};
    \node[dot,below=of c,xshift=-1ex] (d) {};
    \node[dot,below= of c,xshift=2ex] (e) {};
    \node[dot,right=1.5 of c] (f) {};
    \node[dot,below=of f] (g) {};
    \node[dot,right=of f,yshift=-2ex] (h) {};

    \node[absNode,fit={(a) (b) ($(a)+(-0.3,0)$) ($(a)+(0.2,0)$)}] (A) {};
    \node[absNode,fit={(c) (d) (e)}] (B) {};
    \node[absNode,fit={(f) (g) (h)}] (C) {};

    \node[below right=-0.1 of h] {$e$};
    
    \draw[absBase,dashed] ($(B.south)+(-0.2,0)$) -- ($(B.north)+(0.5,0)$); 

    \draw[concEdge] ($(a)+(-0.5,0)$) -- (a);
    \draw[concEdge] ($(b)+(-0.5,0)$) -- (b);

    \draw[concEdge] (a) -- (b);
    \draw[concEdge] (a) -- (c);
    \draw[concEdge] (b) -- (d);
    \draw[concEdge] (c) -- (d);
    \draw[concEdge] (e) -- (f);
    \draw[concEdge] (f) -- (c);
    \draw[concEdge] (f) -- (g);
    \draw[concEdge] (g) -- (h);
    
    \draw[absEdge] ($(A.west)+(-0.7,0)$) -- (A);
    \draw[absEdge] ($(A.east)+(0,-0.4)$) to [bend right=10] (B);
    \draw[absEdge] ($(B.east)+(0,-0.4)$) to [bend right=10] (C);
    \draw[absEdge] ($(C.north)+(-0.4,0)$) to [bend right=50] ($(B.north)+(0.2,0)$);
    \draw[absEdge, loop above,out=50,in=120,looseness=4]
         ($(A.north)+(0.3,0)$) to ($(A.north)+(-0.3,0)$);
    \draw[absEdge, loop above,out=50,in=120,looseness=5]
         ($(B.north)+(0,0)$) to ($(B.north)+(-0.5,0)$);
    \draw[absEdge, loop above,out=50,in=120,looseness=5]
         ($(C.north)+(0.4,0)$) to ($(C.north)+(-0.1,0)$);

  \end{tikzpicture}
  \caption{Abstraction after refinement.}
  \label{fig:existentialAbs2}
  \end{subfigure}
  \end{minipage}
  \hspace*{\fill}
  
  \caption{A transition system (black) and two existential abstractions
    (\ifcolor red\else gray\fi).}
  \label{fig:existentialAbs}
\end{figure}

As a result, the reachability of states in $(\hat B, \hat\tr)$ does not
imply the reachability of the corresponding states in $(B, \tr)$,
since the abstract transition system can contain \emph{spurious} paths
that have no equivalent paths in $(B, \tr)$. For instance, in
\figref{fig:existentialAbs1} there is an abstract path from the left-most
initial to the right-most state, but there is no concrete path from an
initial state to the right-most state~$e$.

\subsubsection{Predicate Abstraction}

A convenient way to define existential abstractions is \emph{predicate
  abstraction}~\cite{DBLP:conf/cav/GrafS97}. For this, we assume a
finite set~$\mathcal{P} = \{P_1[\bar x], \ldots, P_k[\bar x]\}$ of
sentences that can refer to the constants~$\bar x$ representing system
states. The transition system defined using \emph{Boolean
  abstraction}~\cite{DBLP:conf/tacas/BallPR01}, one of multiple
different versions of predicate abstraction, groups together states
that cannot be distinguished by the predicates in
$\mathcal{P}$. Abstract states can be represented as bit-vectors,
$\hat D = \{0, 1\}^k$, storing for each of the
predicates in $\mathcal{P}$ whether the predicate holds for some
concrete state~$s \in D$ or not:
\begin{equation*}
  h(s) = (p_1(s), \ldots, p_k(s)),
  \qquad
  p_i(s) =
  \begin{cases}
    1 & \text{if~} P_i[s]
    \\
    0 & \text{otherwise}
  \end{cases}
  \qquad
  \text{for~} i \in \{1, \ldots, k\}~.
\end{equation*}

For an abstraction with $k$ predicates, this results in a transition
system~$(\hat B, \hat \tr)$ with $2^k$ abstract states, which can also
be constructed explicitly in terms of the sentences~$B, \tr$ defining
the concrete transition system. For this, we define
$p^{-1}(\hat s)[\bar x]$ as the conjunction of the (positive or
negative) predicates corresponding to an abstract
state~$\hat s = (b_1, \ldots, b_k) \in \hat D$:
\begin{align*}
  p^{-1}(\hat s)[\bar x] &~=~
                           \bigwedge_{i=1}^k (P_i[\bar x] \leftrightarrow b_i)~.
  \\
  \intertext{The abstract transition system is then given by:}
  \hat B &~=~ \{ \hat s \in \hat D \mid B[\bar x] \wedge p^{-1}(\hat s)[\bar
           x]
           \text{~is $T$-satisfiable}\}~,
  \\
  \hat \tr &~=~ \{ (\hat s, \hat s') \in \hat D^2 \mid
           p^{-1}(\hat s)[\bar
           x] \wedge   \tr[\bar x, \bar x'] \wedge p^{-1}(\hat s')[\bar
           x']
           \text{~is $T$-satisfiable}\}~.
\end{align*}

\subsubsection{Counterexample-Guided Abstraction Refinement}

The key idea of the CEGAR algorithm is to \emph{refine} abstractions
when they are too coarse to prove some desired property of the
concrete transition system; this way, the verification of some system
can start with a very small, imprecise abstract transition system
(e.g., the system obtained for an empty
set~$\mathcal{P}_0 = \emptyset$ of predicates) and gradually refine
the abstraction until it is possible to prove the
property~\cite{cegar}. Refinement can be seen as sub-dividing the
states of an abstract transition system, so that spurious paths are
eliminated. For instance, in \figref{fig:existentialAbs2}, the abstract
state in the middle has been split into two states, this way removing
abstract paths and making the abstract state containing $e$
unreachable from the initial state. In general, multiple abstract
states have to be split to remove paths in the abstract transition
system. Abstraction refinement can be achieved with the help of Craig
interpolation~\cite{DBLP:conf/popl/HenzingerJMM04}.

For this, suppose $E[\bar x]$ is a sentence representing error states,
and $\hat E = \{ \hat s \in \hat D \mid
  \text{there is~} s \in D \text{~with~} h(s) = \hat s \text{~and~}
           E[s] \}$ is the corresponding abstract
set. Suppose that abstract analysis with the
predicates~$\mathcal{P}_l$ has identified a
path~$\hat s_1, \hat s_2, \ldots, \hat s_m$ of states such that
$\hat s_1 \in \hat B$, $(\hat s_i, \hat s_{i+1}) \in \hat \tr$ for all
$i \in \{1, \ldots, m-1\}$, and $\hat s_m \in \hat E$, but no
corresponding path exists in the concrete transition system. The
latter assumptions means that the formula
\begin{equation*}
  \underbrace{B[\bar x_1]}_{A_1} \wedge
  \underbrace{\tr[\bar x_1, \bar x_2]}_{A_2} \wedge
  \underbrace{\tr[\bar x_2, \bar x_3]}_{A_3} \wedge \cdots \wedge 
  \underbrace{\tr[\bar x_{m-1}, \bar x_m]}_{A_m} \wedge
  \underbrace{E[\bar x_m]}_{A_{m+1}}
\end{equation*}
is $T$-unsatisfiable. The path~$\hat s_0, \hat s_1, \ldots, \hat s_m$
is also called a ``spurious counterexample'' that can be eliminated by
adding further predicates, increasing the precision of predicate
abstraction. Let $I_0, I_1, \ldots, I_{m+1}$ be a sequence interpolant
for the conjunction~$A_1 \wedge \cdots \wedge A_{m+1}$. By definition,
each $I_i$ for $i \in \{1, \ldots, m\}$ only refers to the
symbols~$\bar x_i$, so that we obtain $m$ new
sentences~$I_1[\bar x], \ldots, I_m[\bar x]$ by renaming; those
sentences give rise to the refined abstract transition system, defined
by
$\mathcal{P}_{l+1} = \mathcal{P}_{l} \cup \{I_1[\bar x], \ldots,
I_m[\bar x]\}$.

In the refined abstract system, the spurious counterexample does not
exist any more, in the sense that there is no abstract
path~$\hat s'_1, \hat s'_2, \ldots, \hat s'_m$ of length~$m$ to an
error state. This is due to the new predicates: in every initial
state~$\hat s \in \hat B$, the bit corresponding to the new
predicate~$I_1$ has to be $1$; therefore, in every state reachable in
one step from an initial state, the bit for $I_2$ has to be $1$,
etc. Finally, the predicate~$I_m$, which is present in every state
reachable in $m-1$ steps, is inconsistent with the sentence~$E$
defining error states, implying that no paths of length~$m$ to error
states are possible. There might be, of course, paths of different
length to error states, which can be removed by further refinement
steps.

\subsubsection{Variants of Predicate Abstraction and CEGAR}

The explicit construction of abstract transition systems tends to be a
time-consuming step and is often avoided in newer implementations
of CEGAR using the \emph{lazy abstraction}
paradigm~\cite{DBLP:conf/popl/HenzingerJMS02}. Instead of working
explicitly with abstract states, lazy abstraction constructs a
\emph{reachability graph} in which every node is labeled with a
Boolean combination of predicates in $\mathcal{P}$, summarizing a
whole set of abstract states. An edge from a node~$n$ to $n'$ implies
that the formulas~$\phi, \phi'$ labeling $n, n'$, respectively,
satisfy
$\phi[\bar x] \wedge \tr[\bar x, \bar x'] \models_T \phi'[\bar
x']$. The reachability graph is built on-the-fly during exploration,
starting from the initial states, and refined when error states are
encountered.  The Impact algorithm does not even store and
reuse predicates, but instead generates new predicates (using Craig
interpolation) each time a new path is
considered~\cite{DBLP:conf/cav/McMillan06}.

A further optimization, utilized in most papers on CEGAR, is to take
again the control structure of software programs into account (e.g.,
\cite{DBLP:conf/popl/HenzingerJMS02,
  DBLP:conf/cav/McMillan06,DBLP:conf/popl/HeizmannHP10}). Besides
leading to smaller interpolation queries, since
only one path is considered at a time, sensitivity to control flow enables
verification tools to maintain a separate set of predicates for each
control location. The even more general framework of constrained Horn
clauses, described in the next section, can handle many different
kinds of control structure, including procedure calls and concurrency.


\subsection{CHC: Craig Interpolation and Constrained Horn Clauses}
\label{sec:chc}

\contents{
\item Rephrasing the verification problem as the problem of satisfying
  Horn constraints
\item Relationship between recursion-free Horn clauses and Craig interpolation
\item Inference of function contracts and other annotations
\item Is there a connection to Databases? Maybe not, only that Horn
  clauses (Datalog) are used in both cases. But in the database case,
  interpolants in the Horn fragment are computed.
}

The framework of \emph{Constrained Horn Clauses} (CHCs) rephrases
program verification in terms of model construction in an extension of
a theory~$T$ with additional
predicates~\cite{DBLP:conf/pldi/GrebenshchikovLPR12,DBLP:conf/birthday/BjornerGMR15}. CHCs
can be seen as an intermediate language that lets verification
front-ends communicate in a standardized way with verification
back-ends. CHCs also subsume the extended versions of Craig
interpolation (parallel interpolants, sequence interpolants, tree
interpolants) that were defined in \secref{sec:extendedCraig} and let us
state a corresponding, more general result.

\subsubsection{Definitions and Basic Properties}

\begin{definition}[Constrained Horn Clause~\cite{DBLP:journals/fmsd/RummerHK15}]
  Suppose $T$ is a theory and $\mathcal{R}$ a set of relation symbols
  disjoint from the signature~$\Sigma_T$ of $T$. A \emph{Constrained
    Horn Clause (CHC)} is a
  sentence~$\forall \bar x.\; C \wedge B_1 \wedge \cdots \wedge B_n
  \to H$ in which:
  \begin{itemize}
  \item $C$ is a formula over $\Sigma_T$ that can refer to the
    variables~$\bar x$;
  \item each $B_i$ is an application~$p(t_1, \ldots, t_k)$ of a
    relation symbol~$p \in \cal R$ to terms over
    $\Sigma_T$ and $\bar x$;
  \item $H$ is either $\bot$ or  an
    application~$p(t_1, \ldots, t_k)$ of $p \in \cal R$ to terms over
    $\Sigma_T$ and $\bar x$.
  \end{itemize}
\end{definition}

We call $C$ the \emph{constraint,} $H$ the \emph{head,} and
$C \wedge B_1 \wedge \cdots \wedge B_n$ the \emph{body} of the
clause. Constraints~$C = \top$ are usually left out. When modeling the
partial correctness of a program, the relation symbols~$\cal R$ are
used to represent verification artifacts like inductive invariants or
contracts, while the clauses model conditions those artifacts have to
satisfy. Verification is carried out by \emph{constructing a model} of
the clauses, which then includes all the necessary artifacts
witnessing the correctness of the program. The algorithms we have
encountered in \secref{sec:imc} and \secref{sec:cegar} can be adapted to
work on sets of CHCs.

This process is formalized using two notions of solvability of clauses.  For
the sake of presentation, we focus again on theories~$T$ that come with a
uniquely defined structure~$S_T$, i.e., for which the
set~$\mathcal{S}_T = \{S_T\}$ in \defref{def:theory} is a singleton
set.  A set of CHCs is called \emph{semantically solvable} if the
structure~$S_T$ can be extended to a structure over the
signature~$\Sigma_T \cup \cal R$ that satisfies all clauses, i.e., if
the set of CHCs is $T$-satisfiable according to
\defref{def:theory}.\footnote{\label{fn:looseSemantics}For the general case of theories~$T$
  with multiple structures, two definitions of semantic solvability
  are commonly used. In \emph{loose
    semantics}~\cite{DBLP:journals/fmsd/RummerHK15,DBLP:conf/birthday/BjornerGMR15},
  a set of CHCs is considered semantically solvable if \emph{every} $T$-structure
  can be extended to a structure satisfying all clauses. Existing CHC
  solvers more commonly apply the \emph{strict semantics} also used by
  SMT solvers, which asks whether there exists \emph{some}
  $T$-structure that can be extended to a structure satisfying all
  clauses. CHCs are often formulated in such a way that both notions
  of semantics lead to the same result.} A set of CHCs is called
\emph{syntactically solvable} if all clauses can be made
$T$-valid by uniformly substituting formulas over the
signature~$\Sigma_T$ for the relation symbols from
$\cal R$~\cite{DBLP:journals/fmsd/RummerHK15}.\todo{should we define
  this notion formally?} Syntactic solvability is related to the
classical \emph{solution problem} in
logic~\cite{DBLP:conf/frocos/Wernhard17}.

\begin{example}
  We model the Fibonacci example from \figref{fig:fibonacci} using
  CHCs. For this, we introduce a relation symbol $\mathit{loop}$ that
  represents the loop invariant of the function~\texttt{fib} and
  formulate clauses for the \emph{Initiation,} \emph{Consecution,} and
  \emph{Sufficiency} conditions. Clause~\eqref{clause:loop1} tells
  that the invariant should hold when entering the loop, i.e., for any
  value~$n$ with $n \geq 0$ and $a = 0, b = 1, i =
  0$. Clause~\eqref{clause:loop2} ensures that the invariant is
  preserved by the loop body: whenever $\mathit{loop}$ holds for
  values~$n,a,b,i$ and the loop condition~$i < n$ is satisfied,
  $\mathit{loop}$ should also hold for the state after one loop
  iteration. Clause~\eqref{clause:loop3} ensures that whenever
  $\mathit{loop}$ holds for values~$n,a,b,i$ and the loop terminates
  due to $i \geq n$, the postcondition~$a \geq 0$ is not violated.
  \begin{align}
    %
    \forall n.\; n \geq 0 &\to \mathit{loop}(n, 0, 1, 0) \label{clause:loop1}
    \\
    \forall n, a, b, i.\; i < n \wedge \mathit{loop}(n, a, b, i) &\to \mathit{loop}(n, b,
                                                                   a+b, i+1) \label{clause:loop2}
    \\
    \forall n, a, b, i.\; i \geq n \wedge a \not\geq 0 \wedge \mathit{loop}(n, a, b, i)
                          &\to \bot \label{clause:loop3}
  \end{align}
  Every syntactic solution of the clauses corresponds to an inductive
  loop invariant. The existing CHC solvers, which are often
  interpolation-based, can easily compute such solutions; one possible
  solution is the assignment
  $\{\mathit{loop}(n, a, b, i) \mapsto a \geq 0 \wedge b \geq 0 \}$,
  proving that the program is partially correct. Note that $a \geq 0
  \wedge b \geq 0$ coincides with the inductive invariant that was
  discovered in \secref{sec:motivating}.
  \lipicsEnd
\end{example}

If a set of CHCs is syntactically solvable, then it is also
semantically solvable. The converse is not true in general, because a
set of CHCs could only have models that cannot be defined explicitly
in the theory~$T$~\cite{DBLP:journals/fmsd/RummerHK15}. For the
fragment of \emph{recursion-free} CHCs, however, the notions of
semantic and syntactic solvability coincide thanks to the Craig
interpolation theorem.
To illustrate the relationship between CHCs and $T$-separators,
consider formulas~$A[\bar x, \bar y], B[\bar x, \bar z]$ that only
have the variables~$\bar x$ and symbols from $\Sigma_T$ in common,
while $\bar y, \bar z$ are local variables that only occur in
$A[\bar x, \bar y]$ or $B[\bar x, \bar z]$, respectively. Consider
then the CHCs
$\forall \bar x, \bar y.\; A[\bar x, \bar y] \to I(\bar x)$ and
$\forall \bar x, \bar z.\; B[\bar x, \bar z] \wedge I(\bar x) \to
\bot$. By definition, any syntactic solution of the two clauses is a
$T$-separator\footnote{We consider the free
  variables~$\bar x, \bar y, \bar z$ as constant symbols at this
  point.} for the
conjunction~$A[\bar x, \bar y] \wedge B[\bar x, \bar z]$, and vice
versa, implying that the two CHCs are syntactically solvability if and
only if $A[\bar x, \bar y] \wedge B[\bar x, \bar z]$ is
$T$-unsatisfiable. In a similar way, sequence interpolants and tree
interpolants can be represented using sets of CHCs.

To state this result more generally, for a set~$\cal C$ of CHCs over
the relation symbols~$\cal R$ we introduce a \emph{dependence
  relation}~$\to_{\mathcal{C}}\; \subseteq {\cal R}^2$, defining
$p \to_{\cal C} q$ if there is a CHC in $\cal C$ that contains $p$ in
the head and $q$ in the body. The set~$\cal C$ is called
\emph{recursion-free} if $\to_{\cal C}$ is acyclic, and
\emph{recursive} otherwise.

\begin{theorem}[Solvability of
  CHCs~\cite{DBLP:journals/fmsd/RummerHK15}]
  \label{thm:chcRecursionfree}
  Let $\cal C$ be a finite and recursion-free set of CHCs over a
  theory~$T$. Then $\cal C$ is semantically solvable\footnote{For
    theories with multiple models, applying \emph{loose
  semantics} as discussed in Footnote~\ref{fn:looseSemantics}.} if and only if it is syntactically solvable.
\end{theorem}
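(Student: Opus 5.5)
The direction ``syntactically solvable $\Rightarrow$ semantically solvable'' is immediate. Given a syntactic solution $\{p \mapsto \phi_p\}$, extend each structure of $\mathcal{S}_T$ by interpreting every $p \in \mathcal{R}$ as the relation defined by $\phi_p$ in that structure. Since every clause becomes $T$-valid after substitution, each clause holds in every such extension. This works under loose semantics as well, since every $T$-structure gets extended. The interesting direction is the converse, and the plan is to prove it by induction along the acyclic dependence relation $\to_{\mathcal{C}}$, constructing solutions one relation symbol at a time via \thmref{thm:separators}.

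\textbf{Step 1: normalisation.} First, rewrite every clause so that head arguments are distinct variables: $p(\bar t)$ in the head becomes $p(\bar y)$, and the equations $\bar y = \bar t$ are added to the constraint. Second, duplicate relation symbols so that $\mathcal{C}$ becomes tree-shaped: every symbol occurs in the body of at most one clause, and at most one occurrence per clause. This unfolding is finite because $\mathcal{C}$ is recursion-free and finite. A solution of the unfolded system yields a solution of the original one by taking, for each original symbol $p$, the conjunction of the solutions of its copies. Semantic solvability is also preserved by the unfolding, because each copy can be interpreted exactly as the original symbol.

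\textbf{Step 2: strongest solutions.} For a tree-shaped system, each symbol $p$ has a ``strongest'' definition as a formula. Let $\mathrm{Reach}_p(\bar y)$ be the disjunction, over all clauses with head $p(\bar y)$, of the existential closure over the local variables of $C \wedge \bigwedge_i \mathrm{Reach}_{q_i}(\bar t_i)$, where the $q_i(\bar t_i)$ are the body atoms. This is defined by induction on $\to_{\mathcal{C}}$, going from leaves to root. Semantic solvability implies that, in every $T$-structure, the relation defined by $\mathrm{Reach}_p$ is contained in any model's interpretation of $p$. Consequently, for every clause with head $\bot$, the body with each $q(\bar t)$ replaced by $\mathrm{Reach}_q(\bar t)$ is $T$-unsatisfiable. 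If quantified $\Sigma_T$-formulas are allowed as solutions, this already gives a syntactic solution. Without that, interpolation is needed to get around the infinitary and quantifier issues.

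\textbf{Step 3: separators top-down (the main obstacle).} The plan is to work from the root clauses towards the leaves, treating the variables of a clause as fresh constants. For the child relation $q$ in a query clause, take $A$ to be everything below $q$ in the tree, namely the clauses defining $q$ and its descendants, with the relation symbols kept as uninterpreted predicates. Take $B$ to be the rest of the system. The constants shared by $A$ and $B$ are exactly the arguments of $q$. \thmref{thm:separators} then gives a separator $I$ whose non-theory symbols are among these shared constants; relation symbols from $A$ only do not occur in $B$, so they cannot appear in $I$. Abstracting those constants to $\bar y$ gives a candidate $\phi_q(\bar y)$, which replaces $q$; the process then recurses into both sides.

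The delicate points are twofold. First, $A$ and $B$ must be stated as first-order theories, namely universally closed clause sets, whose joint $T$-unsatisfiability follows from semantic solvability. Under loose semantics this requires that no extension of any $T$-structure satisfies all clauses together with the negated query, which is exactly what holds. Second, the separator must contain no relation symbols from $\mathcal{R}$. Tree-shapedness is precisely what guarantees this, since each symbol lies entirely on one side of the cut. Once this is checked, the remaining verification of $T$-validity of each substituted clause follows from the two entailments of the separator, and a final induction confirms that the result is a syntactic solution.
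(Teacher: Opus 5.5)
Your Step~2 on its own already proves the direction ``$\Rightarrow$'' for the theorem as stated. The chapter's syntactic solvability allows arbitrary formulas over $\Sigma_T$, quantifiers included; quantifier-freeness is added only as an extra conclusion in the subsequent lemma. Recursion-freeness makes $\mathrm{Reach}_p$ a finite formula, so there is no infinitary issue. Under loose semantics, your induction shows that $\mathrm{Reach}_p$ is contained in $p$ in every extension satisfying the clauses. Hence each substituted query clause is a $\Sigma_T$-sentence true in every $T$-structure, and so $T$-valid. The substituted definite clauses are valid by construction. Step~1 is not needed for this, since $\mathrm{Reach}_p$ can be defined directly along $\to_{\mathcal{C}}$.

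This is a genuinely different and more elementary route than the paper's. The paper expands the clauses into a disjunctive interpolation problem, turns that into a tree interpolation problem, and reads the solution off a tree interpolant. Your argument avoids interpolation entirely by using first-order definability of the least model. The paper's route is the one that carries over to the quantifier-free lemma.

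Step~3, which you present as the main argument, fails as written. You claim that no extension of a $T$-structure satisfies the definite clauses together with the negated query. This does not follow from semantic solvability and is false in general. Horn clauses only bound the relation symbols from below, so interpreting every symbol of $\mathcal{R}$ as the full relation satisfies all definite clauses. The negated query then holds whenever its constraint is satisfiable. For instance, $\forall x.\, x \geq 0 \to p(x)$ and $\forall x.\, x < 0 \wedge p(x) \to \bot$ are solvable over LIA, yet the first clause together with $c < 0 \wedge p(c)$ is satisfiable. Semantic solvability is existential in the interpretation of $\mathcal{R}$, and yields unsatisfiability only for the least interpretation, i.e., for your $\mathrm{Reach}$ formulas.

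There is a second problem with the cut. The symbol $q$ occurs in both $A$ and $B$, and the universally closed clauses in $A$ contain no constants. So the separator is neither forced to range over $q$'s arguments nor prevented from mentioning $q$.

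The repair is to cut the inlined expansions rather than clause sets. Take $A = \mathrm{Reach}_q(\bar d)$, with fresh constants $\bar d$ for $q$'s arguments and Skolem constants for local variables. Let $B$ consist of the query constraint, $\bar d = \bar t$, and the expansions of the sibling atoms (equivalently, use the completion of the definitions). That is precisely the paper's expansion, solved by repeated binary interpolation. Even repaired, Step~3 gives nothing beyond Step~2 for this theorem, because \thmref{thm:separators} does not guarantee quantifier-free separators. It only matters for the quantifier-free lemma, under plain quantifier-free interpolation.
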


To prove the direction~``$\Rightarrow$'' of this result, the
recursion-free CHCs can be expanded to obtain a disjunctive
interpolation problem~\cite{DBLP:conf/cav/RummerHK13}, which can be
translated further to a tree interpolation problem; the syntactic
solution can be extracted from the tree interpolant modulo
$T$. Recursion-free CHCs are exponentially more succinct than tree
interpolation problems, however. The problem of determining whether a
recursion-free set of CHCs over the theory~LIA has solutions is
co-NEXPTIME-complete, while checking whether a formula is
LIA-unsatisfiable (and interpolants exist) is only co-NP-complete.

The relationship between semantic and syntactic solvability also
holds in the quantifier-free setting, provided that the theory~$T$
admits quantifier-free interpolation:
\begin{lemma}[Quantifier-free solvability of CHCs]
  Suppose $T$ is a theory that admits plain quantifier-free
  interpolation. Let $\cal C$ be a finite and recursion-free set of
  CHCs over $T$ in which the constraints do not contain quantifiers.
  Then $\cal C$ is semantically solvable\footnote{As in
    \thmref{thm:chcRecursionfree}, for theories with multiple models,
    applying \emph{loose semantics} as discussed in
    Footnote~\ref{fn:looseSemantics}.} if and only if it is
  syntactically solvable, with a solution that does not contain
  quantifiers.
\end{lemma}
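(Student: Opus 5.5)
The direction ``$\Leftarrow$'' is immediate: a quantifier-free syntactic solution is in particular a syntactic solution, and every syntactically solvable set of CHCs is semantically solvable. For ``$\Rightarrow$'' I would not pass through \thmref{thm:chcRecursionfree} and then eliminate quantifiers. Instead, I would construct the solution directly, by induction along the acyclic dependence relation $\to_{\cal C}$, computing one binary $T$-separator at a time and using plain quantifier-free interpolation at each step.

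The construction works as follows. Fix a topological order of $\cal R$ in which every relation symbol comes after all the symbols occurring in bodies of clauses with that symbol in the head; this exists because $\to_{\cal C}$ is acyclic. For each $p \in \cal R$, let the \emph{unfolding} $U_p(\bar x)$ be the quantifier-free formula obtained by expanding clauses with head $p$ recursively down to facts. Each body atom is replaced by a fresh, renamed copy of the expansion of its relation symbol, all local variables are turned into fresh constants, and alternative clauses are combined disjunctively. The result is finite by recursion-freeness and finiteness of $\cal C$. Semantic solvability gives a structure $S$ extending $S_T$ satisfying $\cal C$, and by induction every model of $U_p(\bar a)$ forces $p^S(\bar a)$. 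Dually, for each $p$ I build the quantifier-free ``context'' formula $D_p(\bar x)$ expressing that $p(\bar x)$ leads, via clauses, to a violation of some query clause (head $\bot$), where the other body atoms are again expanded by their unfoldings. The key semantic fact is that $U_p(\bar c) \wedge D_p(\bar c)$ is $T$-unsatisfiable, because otherwise the derivation tree it encodes would derive $\bot$ in $S$. Both formulas contain only symbols of $\Sigma_T$ and constants, with $\bar c$ the only shared constants.

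Next I would process the symbols in topological order and define $\mathit{Sol}(p)$ as a quantifier-free $T$-separator for $\big(\bigvee_{\text{clauses with head }p} (C \wedge \bigwedge_i \mathit{Sol}(q_i)(\bar t_i))\big) \wedge D_p$. Here the already computed solutions of body predicates are used in place of their unfoldings; this is the standard tree/disjunctive-interpolation trick. The separator exists by plain quantifier-free interpolation, since everything is quantifier-free with only constants besides $\Sigma_T$. Its unsatisfiability follows inductively: each $\mathit{Sol}(q_i)$ is entailed by the unfolding of $q_i$ and is inconsistent with $q_i$'s context, and the context of $q_i$ contains $D_p$ together with the constraint of this clause. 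Renaming the shared constants $\bar c$ back to variables gives the solution. Each clause with head $p$ is valid by the first separator condition. Query clauses are valid because for them $D$ is essentially the negated constraint, so they are handled by the second separator condition of their body predicates, ordered so that the last predicate processed closes the clause.

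The main obstacle is the invariant making this induction go through. A separator computed for $p$ must be strong enough for \emph{all} later uses, yet it is only constrained against the context $D_p$ built from unfoldings rather than from the chosen, possibly weaker, solutions of sibling predicates. I would resolve this by defining the contexts $D_p$ lazily with respect to the already fixed solutions. Alternatively, I would reduce to tree interpolation by duplicating predicates into a tree-shaped expansion, which preserves quantifier-freeness, compute a quantifier-free tree interpolant by iterated binary separators as in~\cite{DBLP:journals/fmsd/RummerHK15}, and take disjunctions over the copies of each $p$. I would choose the latter route, since the tree-interpolant conditions already encode exactly the required inductiveness and only the quantifier-freeness of each binary step needs checking.
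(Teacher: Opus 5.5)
You have the ``$\Leftarrow$'' direction right. You also have the key observation for the quantifier-free part: every separator query in an iterated computation is a conjunction of quantifier-free formulas over $\Sigma_T$ and constants, so plain quantifier-free interpolation applies at each step. The gap is in how you turn the expanded problem back into a solution of $\mathcal{C}$.

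First, copies of $p$ that come from duplicating a predicate occurring in several body positions must be \emph{conjoined}, not disjoined. Take $x = 0 \to p(x)$ and $p(x) \wedge p(y) \wedge x + y = 1 \to \bot$. In the expansion with copies $p_1, p_2$, iterated separators may return $p_1 \mapsto x = 0$. They may then return $p_2 \mapsto x \neq 1$, a valid separator against $x_1 = 0 \wedge x_1 + x_2 = 1$. Together these form a correct tree interpolant, but their disjunction $x \neq 1$ violates the query clause at $x = 2$, $y = -1$. The conjunction $\bigwedge_k \mathit{Sol}(p^{(k)})$ always works: every head clause of $p$ entails each copy, and each body position only needs its own copy. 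The same example also defeats your first route, with or without lazy contexts. A single separator for $p$, computed while the other occurrence of $p$ in the same body is still represented by its unfolding, can be $x \neq 1$.

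Second, you never say how several clauses with the same head enter a \emph{tree} interpolation problem, which has one formula per node and conjoins all children's interpolants. Hanging the body atoms of all alternatives under one node is unsound. If one alternative's subtree is unsatisfiable on its own, the tree interpolant may put $\bot$ there and at the parent, violating the other clauses. Solving one tree problem per derivation and disjoining the results also fails for non-linear clauses, because of cross-combinations. Take $x = 0 \to q(x)$, $x = 1 \to q(x)$, the same two clauses for $r$, and $q(x) \wedge r(y) \wedge x + y = 5 \to \bot$. The derivation choosing $(0,1)$ admits $q \mapsto x \neq 4$, $r \mapsto y = 1$. The derivation choosing $(1,0)$ admits $q \mapsto x = 1$, $r \mapsto y \neq 4$. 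The merged formulas therefore contain the disjuncts $x \neq 4$ and $y \neq 4$, and so admit $x = 2$, $y = 3$.

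What is missing is the disjunctive-interpolation step the paper alludes to. In the body-disjoint expansion, process copies bottom-up and give each copy a single quantifier-free separator. On one side is the disjunction of its clause bodies, with children replaced by their computed solutions. On the other is its context, with already processed siblings replaced by their solutions. Maintain the invariant that the partially substituted clause set stays semantically solvable; this is what makes each query unsatisfiable. Only then merge the copies, by conjunction.
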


\subsubsection{Encoding of Recursive Programs using CHCs}
\label{sec:recursive}

\begin{wrapfigure}{r}{0.45\linewidth}
\begin{lstlisting}
int fibR(int n) {
  if (n == 0)
    return 0;
  else if (n == 1)
    return 1;
  else
    return fibR(n-1) + fibR(n-2);
}
\end{lstlisting}

\caption{Recursive program computing the $n$th element of the Fibonacci sequence.}
  \label{fig:fibonacciRec}
\vspace*{-0.5ex}
\end{wrapfigure}

\noindent
Constrained Horn clauses can be used also for inference tasks beyond
just inductive invariants, making it possible, among others, to
automatically verify recursive or concurrent
programs~\cite{DBLP:conf/pldi/GrebenshchikovLPR12}. Consider the
program in \figref{fig:fibonacciRec}, which is a recursive
implementation of the Fibonacci function from
\secref{sec:motivating}. Properties of the recursive implementation
can be verified by applying the principle of \emph{Design by
  Contract}~\cite{DBLP:journals/computer/Meyer92}, describing the
behavior of the function using pre- and postconditions that serve a
similar purpose as an inductive loop invariant: every recursive call
to the function has to satisfy the precondition and can then rely on
the postcondition being established by the function upon return.

We illustrate how this style of reasoning can be used to prove a
property about the Fibonacci implementation: whenever~$n \geq 3$, the result
of the function is a number~$\verb!fibR!(n) \geq 2$. There are several
possible ways to formulate CHCs for this recursive
function~\cite{DBLP:conf/birthday/BjornerGMR15}. Choosing an encoding
that is close to the classical Design by Contract methodology, we can
reason about the function by introducing two relation
symbols~$\mathit{pre}$ and $\mathit{post}$, representing the pre- and
postcondition of \inl!fibR!, respectively. The
precondition~$\mathit{pre}(n)$ is a formula about the function
argument~$n$, whereas the postcondition~$\mathit{post}(n, r)$ is a
relation between the argument~$n$ and the result~$r$. We then formulate CHCs
that state the properties those symbols have to satisfy:
\begin{align}
  \forall n.\; n \geq 3 &\to \mathit{pre}(n) \label{clause:rec1}
  \\
  \forall n.\; n = 0 \wedge \mathit{pre}(n) &\to \mathit{post}(n, 0) \label{clause:rec2}
  \\
  \forall n.\; n \not= 0 \wedge n = 1 \wedge \mathit{pre}(n) &\to \mathit{post}(n, 1) \label{clause:rec3}
  \\
  \forall n.\; n \not= 0 \wedge n \not= 1 \wedge \mathit{pre}(n)
                        &\to \mathit{pre}(n - 1) \label{clause:rec4}
  \\
  \forall n.\; n \not= 0 \wedge n \not= 1 \wedge \mathit{pre}(n)
                        &\to \mathit{pre}(n - 2) \label{clause:rec5}
  \\
  \forall n, r, s.\; 
  n\not= 0 \wedge n \not= 1 \wedge \mathit{pre}(n)  \wedge 
  \mathit{post}(n-1, r) \wedge \mathit{post}(n-2, s) 
                        &\to  \mathit{post}(n, r+ s) \label{clause:rec6}
  \\
  \forall n, r.\; n \geq 3 \wedge r \not\geq 2 \wedge \mathit{post}(n, r)  &\to \bot \label{clause:rec7}
\end{align}
Clause~\eqref{clause:rec1} states that the contract to be derived
should include inputs~$n \geq 3$, while \eqref{clause:rec7} expresses
that the result~$r$ for an input~$n \geq 3$ should not be less than
$2$. The other clauses encode the recursive function
itself. Clauses~\eqref{clause:rec2} and \eqref{clause:rec3} model the
\inl!return! statements in lines~3 and 5 by stating that the
postcondition has to hold for the result~$0$ and $1$,
respectively. The third \inl!return! statement in line~7
performs two recursive calls and gives rise to somewhat more
complicated clauses: clauses~\eqref{clause:rec4} and
\eqref{clause:rec5} ensure that the recursive calls satisfy the
precondition of the function, while \eqref{clause:rec6} states that
the computed sum~$r + s$ has to satisfy the postcondition, assuming
that the values~$r, s$ produced by the recursive calls satisfy the
postcondition. A solution of the clauses is given by the
assignment~$\{ \mathit{pre}(n) \mapsto n \geq 0,\break \mathit{post}(n, r)
\mapsto n \geq 0 \wedge r \geq 0 \wedge (r < 2 \to n = 0 \vee (n = 1 \wedge r = 1) \vee (n = 2 \wedge r = 1)) \}$.


The clause~\eqref{clause:rec6} is called a \emph{non-linear}
clause, since its body contains three atoms
$\mathit{pre}(n), \mathit{post}(n-1, r), \mathit{post}(n-2, s)$,
whereas all other clauses are \emph{linear} and have at most one body
atom. Non-linear clauses naturally occur when translating recursive or
concurrent programs to CHCs and are handled by CHC solvers with the
help of \emph{tree interpolation.}

\section{Conclusions}

\contents{
\item Outlook: uniform interpolation
}

The chapter has explored the use of Craig interpolation in program
verification, as well as some of the techniques that have been
developed for effectively computing Craig interpolants in SMT
solvers. Craig interpolation is today a standard method applied in
verification tools and there are many more directions that could be
discussed, but which are beyond the scope of the chapter, including:
\begin{itemize}
\item Techniques for deriving Craig interpolants in \emph{combined
    theories,} for instance in the (signature-disjoint) combination of
  arrays and LRA. As a sufficient criterion for quantifier-free
  interpolation in combined theories, the notion of theories being
  \emph{equality interpolating} has been
  identified~\cite{DBLP:conf/cade/YorshM05}. Equality interpolation
  was later shown to be equivalent to the model-theoretical notion of
  \emph{strong
    sub-amalgamation}~\cite{DBLP:journals/tocl/BruttomessoGR14}.
\item The use of \emph{uniform interpolants} in program verification,
  which are in this context also called
  \emph{covers}~\cite{DBLP:conf/esop/GulwaniM08,DBLP:conf/aiia/Gianola22}. Uniform
  interpolation can be used to define a model checking algorithm that
  is similar in spirit to the IMC algorithm (\secref{sec:imc}), but in
  which the computed invariant candidates~$R_i$ are exact.
\item Methods for \emph{controlling} the result of Craig
  interpolation. In applications like verification, not all
  interpolants work equally well, and the ability of algorithms like
  IMC to compute inductive invariants strongly depends on the right
  interpolants being found by the applied interpolation procedure. To
  address this, methods have been developed to control the logical
  strength~\cite{DBLP:conf/vmcai/DSilvaKPW10,DBLP:conf/atva/GurfinkelRS13},
  size~\cite{DBLP:conf/vstte/AltFHS15}, or the
  vocabulary~\cite{DBLP:conf/esop/DSilva10,DBLP:journals/acta/LerouxRS16}
  of interpolants.
\end{itemize}

\section*{Acknowledgments}
\addcontentsline{toc}{section}{Acknowledgments}

The author would like to thank Jean Christoph Jung, Sibylle
M\"ohle-Rotondi, and Christoph Wernhard for the careful proof-reading
and helpful discussions while preparing this chapter.  The work
on the chapter been supported by the Swedish Research Council (VR)
under grant~2021-06327, the Wallenberg project UPDATE, the Ethereum
Foundation under grant~FY25-2124, and the EuroProofNet COST action.

\bibliography{papers}

\end{document}